\pgfplotsset{compat=1.11}
\def\income{w}
\def\la{\lambda}
\def\al{\alpha}
\def\da{\delta}
\def\w{\omega}\def\W{\Omega}
\def\one{\mathbf{1}}
\def\Na{\mathbf{N}}
\def\Re{\mathbf{R}}
\newcommand{\df}[1]{\textit{\textbf{#1}}}
\newcommand{\abs}[1]{ | #1 | }
\def\P{\mathcal{P}}
	\theoremstyle{plain}
	\newtheorem{theorem}{Theorem}
	\newtheorem*{axiom*}{Axiom}
	\newtheorem{proposition}{Proposition}
	\newtheorem{lemma}{Lemma}
	\theoremstyle{definition}
	\theoremstyle{remark}
	\newtheorem*{remark}{Remark}
	\newtheorem*{claim*}{Claim}
\begin{document}
\title[Decreasing impatience]{Decreasing impatience}

\author[Chambers]{Christopher P. Chambers}
\author[Echenique]{Federico Echenique}
\author[Miller]{Alan D. Miller}

\address[Chambers]{Department of Economics, Georgetown University}
\address[Echenique]{Department of Economics, UC Berkeley}
\address[Miller]{Faculty of Law, Western University}

\begin{abstract}
We characterize decreasing impatience, a common behavioral phenomenon in intertemporal choice. Discount factors that display decreasing impatience are characterized through a convexity axiom for investments at fixed interest rates. Then we show that they are equivalent to a geometric average of generalized quasi-hyperbolic discount rates. Finally, they emerge through parimutuel preference aggregation of exponential discount factors. \end{abstract}

\thanks{Echenique thanks the National Science Foundation for financial support (Grants SES-1558757 and CNS-518941). This research was undertaken, in part, thanks to funding from the Canada Research Chairs program.  We are grateful to Chris Shannon, Leslie Marx, and four anonymous referees for useful comments and suggestions.}

\maketitle

\section{Introduction}

Decreasing impatience, a property of intertemporal preferences, has been the intense focus of positive studies in behavioral economics, and of a normative literature (and practice) in project evaluation. Following \citet{prelec2004decreasing} \citep[see also][]{kimr:2019}, if an individual is willing to wait a certain number of periods for a higher payoff, that individual would be willing to wait the same number of periods in the future. For example, if an agent who satisfies decreasing impatience prefers to receive a check for \$110 in a week than to receive a check for \$100 in six days, that agent must also prefer to receive a check for \$ 110 in 31 days to \$ 100 in 30 days. We investigate this definition in a linear context. 

Choices that reflect decreasing impatience are usually modeled by means of parametric models of hyperbolic, or $\beta$-$\delta$ (also called quasi-hyperbolic), discounting. See \cite{loewenstein1992anomalies}, or \cite*{frederick2002time} for a review of the experimental literature. Many behavioral economists have used these parametric models of discounting in studies of consumption, savings, and retirement \citep[for example][]{laibson1997golden,o1999doing,diamond2003quasi}. Our paper provides new characterizations of decreasing impatience as a ``non parametric'' property of discount factors.

Our contribution is to provide characterizations of decreasing impatience as a general property of discounting. The property of decreasing impatience is easy to understand, but we think that it is useful to describe it in terms of either more basic behavioral patterns (axioms), or in terms of a ``story'' for how decreasing impatience emerges from other familiar models. 

Our main result proposes three different characterizations. One is axiomatic: within models of intertemporal choice that rely on discounting, a simple convexity axiom captures decreasing impatience. The axiom relies on a single choice problem that can be implemented in the lab; thus avoiding  incentive-compatibility issues that arise with multiple incentivized  decisions \citep{azrieli2018incentives}. In other words, if one wishes to test the property of decreasing impatience, our axiom describes a single question that can be asked of laboratory subjects. An (incentivized) negative answer falsifies the property of decreasing impatience.

Our second characterization is a story about aggregation, mathematically analogous in a sense to \citet{harsanyi1955cardinal}. We show that any discount factor that exhibits decreasing impatience is the result of a geometric average of generalized $\beta$-$\delta$ preferences. In an ancillary result, we provide a justification for using the geometric average by means of a time consistency axiom, together with other standard normative axioms.\footnote{This justification of geometric averaging of discount factors is, we believe, of some interest independently of the present application.} The justification is needed, we believe, to motivate the use of the geometric average instead of other possible aggregation rules. Now, standard $\beta$-$\delta$ preferences treat the present period as special, and worthy  of a premium in intertemporal tradeoffs. Generalized  $\beta$-$\delta$ preferences simply extend the special treatment to all periods before some cutoff date. Our characterization means that $\beta$-$\delta$ discounting is the canonical class of decreasing impatience discount factors: \textit{All other discount factors with decreasing impatience can be understood as a stationary aggregation of such generalized $\beta$-$\delta$ discount factors.} 

Our third characterization shows that decreasing impatience emerges as a property of equilibrium prices in a competitive economy. In fact, we show that decreasing impatience is the \textit{defining} property of  equilibrium prices in a linear dynamic endowment economy:  any discount factor that exhibits decreasing impatience is the equilibrium price of a parimutuel market, where all agents participating in the market are exponential discounters. Parimutuel markets were first proposed as an information-aggregation mechanism by \cite{eisenberg1959consensus}, and have been the focus of an extensive empirical and experimental literature (see, for example, \cite{plott2003parimutuel}). Aside from its traditional use in horse races, they have been implemented inside large corporations as an information aggregation mechanism \citep{gillen2017pari}. Here we use them as preference aggregation mechanisms. Our result means that decreasing impatience always has a representation as the aggregate of a collection of agents with traditional exponential discount factors, where the aggregation takes the pari-mutuel form. Note that instead of a representation in terms of generalized $\beta$-$\delta$ preferences, we obtain any decreasing impatience discount factor by means of aggregating exponential (i.e $\beta=1$) discount factors.

\paragraph{\textbf{Related literature:}} 
Decreasing impatience is a well-known behavioral phenomenon: see for example \cite{loewenstein1992anomalies}, \cite{prelec2004decreasing}, \cite{frederick2002time}, and \cite{kimr:2019}. It is commonly modeled using hyperbolic or quasi-hyperbolic discount factors, and has been incorporated in multiple theoretical studies \citep{laibson1997golden,o1999doing,diamond2003quasi,brw:2009,kimr:2009,kimr:2010,bgr:2016}. 
\cite{halevy2015time} disentangles three related properties: stationarity, time consistency, and time invariance, and shows that any two of these properties imply the third. \cite{chakraborty2016present} considers the phenomenon of present bias in isolation. His weak present bias axiom is satisfied by a host of models that have been introduced to relax the stationarity assumption of exponential discounting. \citeauthor{chakraborty2016present} characterizes the utility representation (within a certain family) that satisfies the axiom of weak present bias.

Some papers provide microfoundations for decreasing impatience.\footnote{The axiom we use to characterize decreasing impatience might be called a form of risk-seeking for time lotteries in \cite{dejarnette2020time}, but the similarity is simply mathematical---our model is fully deterministic, and no lotteries are considered.}   \cite{sozou1998hyperbolic} derives it as a consequence of uncertainty in the discount rate, by taking linear combinations.  In a model in which payoff times of goods are uncertain, \cite{dasgupta2005uncertainty} derive a result about preference reversals through time which reflect a type of decreasing impatience. In a model of exponential discounting,  \cite{halevy2008,saito2011,chakraborty2020relation} relate decreasing impatience to decision-theoretic phenomena occurring in the study of risk.  More recently, \cite{harstad2020technology} shows that the behavior of policy makers may exhibit decreasing impatience due to a time inconsistency problem resulting from the uncertainty as to whether they will be in office from one period to the next.

The problem of aggregating discount rates has received a lot of attention.\footnote{Our result in Section~\ref{sec:multiplicativeaggregators} connects with the literature on  multiplicative aggregation. Our result is  probably most similar to \citet{hayashi2016consistent}, but there are several differences.  The first, and most obvious difference, is that our framework involves no social disagreement over period rewards.  All disagreement is due to the form of discounting.  A second main difference is that we envision this result as being most relevant when applied to dated rewards.  One of our main properties, indeed, is a Pareto condition applied to dated rewards.  Were we to postulate a form of Pareto for streams, we would be back to the framework of \citet{harsanyi1955cardinal} and \citet{jackson2015collective}.  So, intertemporal tradeoffs for consumption streams should be viewed as ``irrelevant'' here. Finally, the point of \citet{hayashi2016consistent} is that, while a form of dynamic consistency may be interesting, we should not necessarily invoke it in an environment in which social preference is independent of history.  By contrast, our framework has no language for allowing us to condition a ranking on history.  So we implicitly rule out his aggregation functions.  The interesting examples motivating his study involve intertemporal tradeoffs across individuals, a phenomenon that does not obtain here.} The seminal paper by \cite{weitzman2001gamma} documents disagreements about the discount rate, and proposes a solution that implies decreasing impatience. Again, our results speak to this literature under additional assumptions that translate discounting of utils to discounting of consumption streams (for example that all agents share a linear utility over consumption). More recently, \cite{zuber2011aggregation} and \cite{jackson2015collective} show that linear aggregation of exponential discounting preferences and time consistency are incompatible.\footnote{Linear aggregation is discussed in Section~\ref{sec:discussion}.}  \cite{feng2018social} and \cite{hayashi2019social} discuss ways of avoiding this impossibility by weakening the assumed Pareto criteria.  \cite{chambers2018multiple} and \cite{chambers2020pareto} introduce and axiomatize decision criteria for environments with multiple discount rates. These papers focus on desirable properties of the resulting aggregate criterion for making intertemporal choices.  In a general context (allowing for nonlinear period utility functions), \cite{millnernondogmatic2020} assumes a kind of multiple-selves model, where at each period each agent has a discount rate, but entertains the possibility that in the next period she will be convinced of a different discount rate. He establishes agreement on discount rates in the long-term.  In particular, he shows that when each individual ascribes a small probability of having an incorrect discount rate, then the long run rate will tend to look like the ``smallest'' discount rate each agent would have had they not admitted the possibility of error.

As a mathematical result, our Theorem~\ref{thm:aggregatorgeomean} is not particularly novel, and indeed was motivated by the log-opinion pool of statistics.  This is a method of aggregating Bayesian priors, by taking a geometric mean of the density functions.  Versions of this aggregator were characterized by \citet{genest1984characterization} and \citet{west1984bayesian} using axioms very similar to the ones we describe here.  It is worth noting that the failure of the log-opinion pool in probability aggregation to commute with respect to marginal distributions (a property used by \citet{mcconway1981marginalization} to characterize linear aggregation) in the case of probability aggregation does not pose a problem for us.  Discount factors over finer or shorter lengths of time are not additive, but multiplicative by their very nature.

\section{The Model}\label{subsec:modelandmotivation}

\subsection{Notational conventions}

A preference relation over a set is a complete and transitive binary relation, also called a weak order. A function $f:A\subseteq\Re\to\Re$ is \df{weakly monotone increasing}, or \df{non-decreasing}, if $f(x)\geq f(y)$ when $x\geq y$; and \df{strictly monotone increasing}, if $f(x)> f(y)$ when $x> y$. It is \df{weakly monotone decreasing}, or \df{non-increasing}, if $-f$ is weakly monotone increasing; and  \df{strictly monotone decreasing} if $-f$ is strictly monotone decreasing.

The set of bounded real sequences is denoted by $\ell^\infty$, and the subset of non-negative sequences by $\ell^\infty_+$.  A sequence $\{x_t\}\in\ell^\infty$ is (absolutely) \df{summable} if $\sum_{t=0}^\infty \abs{x_t}$ converges. The set of summable real sequences is denoted by $\ell^1$, and the subset of nonnegative summable sequences by $\ell^1_+$.  

\subsection{Discount factors}\label{sec:model}

We consider a model of intertemporal choice in which time is discrete, the horizon is infinite, and the objects of choice are bounded real sequences: $\{x_t: t=0,1,\ldots, \}\in\ell^\infty_+$. One may interpret each $x_t$ as a monetary payoff, or as  the value in ``utils'' of some underlying physical outcome. It is worth emphasizing that all our results hold if we assume a finite, instead of an infinite, time horizon.

We restrict attention to preferences that are represented by means of a monotone weakly decreasing \df{discount factor} $f:\Na\to \Re_+$. So a sequence $x$ is ranked above $y$ for the discount factor $f$ if $\sum_{t=0}^\infty f(t) x_t\geq \sum_{t=0}^\infty f(t) y_t$. In fact we shall take $f$ to be a summable sequence and have values in $(0,1]$.  

Of course, these assumptions are not without loss. We restrict attention to preferences with a linear utility representation, $x\mapsto \sum_t x_t f(t)$.  The linear representation presumes a form of independence, or separability; but these assumptions are well understood and merit no further discussion here.\footnote{We could apply certain techniques in the literature, such as the use of paying in lottery tickets and assuming expected utility \citep{rothmalouf} though perhaps our work is better understood as applying when utils are known.} Future payoffs are discounted, as the values of $f$ are weakly decreasing. Moreover the assumption that  discount factors are summable expresses a particular form of impatience (it implies that no weight is placed ``at infinity.'').  The linear structure encapsulates the idea that our sequences represent utils, as the marginal rate of intertemporal substitution depends only on dates, and not on some physical measure of consumption. 

Formally, then, the objects of choice are bounded non-negative sequences: elements of $\ell^\infty_+$.  We consider preferences $\succeq$ on $\ell^\infty_+$ for which there is a monotone weakly decreasing $f\in\ell^1_+$ with the property that, for any $x,y\in\ell^{\infty}$, $x\succeq y$ if and only if $\sum_{t=0}^\infty f(t) x_t\geq \sum_{t=0}^\infty f(t) y_t$. The class of such preferences is denoted by~$\P$.\footnote{We are, of course, not the first to start from these primitives. A salient example is \cite{loewenstein1992anomalies}, who advocate for $\P$ as a model of intertemporal choice, and then impose a version of decreasing impatience that they show implies hyperbolic discounting.}

A \df{dated reward} is a sequence that is identically zero, except for at most one value $t$. Dated rewards are thus identified with pairs $(x,t)$, denoting a sequence that is zero everywhere and equal to $x\geq 0$ at time $t$. Let $\mathcal{D}$ be the set of all dated rewards.  If $\succeq\in\P$ is one of the preferences under consideration, we have that $(x,t)\succeq (y,s)$ if and only if  $x f(t)\geq yf(s)$. So we can say that an agent with preferences $\succeq$ is happy to delay consumption of $y$ at period $t$ in exchange for $x > y$ at $t+1$ if and only if  $\frac{y}{x}< \frac{f(t+1)}{f(t)}$. In particular, the agent is indifferent between consuming or delaying when $\frac{y}{x}= \frac{f(t+1)}{f(t)}$: so the ratio $f(t+1)/f(t)$ is an expression of \df{how impatient} the agent is when it comes to consumption in periods $t$ and $t+1$. It is how much of a discount the earlier payoff has to be, relative to the later higher payoff, for the two to provide the same utility.

The main focus of our paper are preferences for which the ratio $f(t+1)/f(t)$  is monotone weakly increasing. Such preferences, and their associated discount factors, are said to satisfy \df{decreasing impatience}.\footnote{The property may be more properly referred to as ``weak decreasing impatience''. We use the shorthand term ``decreasing impatience'' for simplicity, and do not refer to the strict version in this paper.} 

A preference $\succeq\in \P$ with associated discount factor $f$ is \df{stationary} if the ratio $f(t+1)/f(t)$ is constant; independent of $t$. It is well known, and easy to see, that this case corresponds to the existence of $\da\in (0,1]$ and a scalar $A$ for which $f(t)=A\da^t$ (indeed, $A=f(0)$). The discount factor is then associated with a constant \df{exponential discount rate} $\da$. Note that stationary preferences also display decreasing impatience. The subclass of stationary preferences, also called exponential discounting preferences, is denoted by $\P^S$. 

In our discussion, stationarity and decreasing impatience are defined as properties of $f$. They can also be defined as properties of $\succeq$: \df{decreasing impatience} says that if $x>y$, $s<t$, and $(x,t)\succeq (y,s)$ then $(x,t+r)\succeq (y,s+r)$ for all $r>0$. Stationarity strengthens this to be an ``if and only if'' statement, holding for all $x,y$ (see \cite{chakraborty2016present} for an eloquent discussion of these properties). It is easy to see that the properties of $\succeq$ are equivalent to our definitions, within the class $\P$, see \emph{e.g.}\ \citet{prelec2004decreasing}, Corollary 1, where it first appears.

In applications, it is common to model decreasing impatience through a $\beta$-$\delta$, or quasi-hyperbolic, discount factor \citep{laibson1997golden}: these take the form $f(t)=\beta^{\min\{t,1\}}\da^t$, with $\da,\beta\in (0,1]$, so that $f(t+1)/f(t)$ goes from $\beta \da$ when $t=0$ to $\da$ for all $t>0$. The idea is that period $t=0$ plays a special role. We are interested in a generalization of this model that extends this special role to all initial periods: $t=0,\ldots,t^*-1$ for some $t^*\geq 1$. 

Specifically, say that 
a discount factor $f$ is \df{generalized $\beta$-$\delta$} if there are $\beta,\delta\in (0,1]$ and $t^*$ such that \[ 
f(t) = \beta^{\min\{t^*,t \}} \da^t.
\] In a generalized  $\beta$-$\delta$ discount factor, the measure of impatience $f(t+1)/f(t)$ goes from $\beta \da$ in periods $t=0,\ldots t^*$ to $\da$ in periods $t> t^*$. The standard quasi-hyperbolic model obtains when $t^*=1$, and stationary (exponential) discounting when $\beta=1$.\footnote{A reason for this generalization is that individuals may not measure time periods in the same way. This is easiest to see in a continuous model, where a period corresponds to a length of time before preferences `change', and where there is no reason to think that the length of time should be the same for all agents. Generalized $\beta-\delta$ preferences are simply a way to represent this in discrete time.}

\section{Main results}\label{sec:main}

Before we state our main results, we introduce a few preliminary ideas. The first is a behavioral axiom: a pattern of intertemporal choice which says that for any principal $k$ and rate of return $r$, investing half of $k$ at maturity $t-1$, and half at maturity $t+1$, is always preferred to investing all of $k$ at maturity $t$. Such a pattern of choice is called \df{Compound-interest convexity.} Formally, the statement of the axiom is:

\begin{axiom*}[Compound-interest convexity] For all $k>0$, all $t \geq 1$ and all $r>0$,
\[ 
\left(\frac{k}{2}(1+r)^{t-1},t-1\right)  + \left(\frac{k}{2}(1+r)^{t+1},t+1\right) \succeq  (k(1+r)^{t},t).
\] \end{axiom*}
Recall that the dated reward notation $(x,t)$ refers to a sequence in $\ell^\infty_+$, making the addition of dated rewards meaningful. The expression $\left(\frac{k}{2}(1+r)^{t-1},t-1\right)  + \left(\frac{k}{2}(1+r)^{t+1},t+1\right)$ refers to the sequence that is zero everywhere except for in periods $t-1$ and $t+1$, at which it equals, respectively $\frac{k}{2}(1+r)^{t-1}$ and $\frac{k}{2}(1+r)^{t+1}$. As we shall see, within the class $\P$, compound-interest convexity characterizes decreasing impatience.\footnote{Recall the characterization of decreasing impatience described by \citet{prelec2004decreasing} by log-convexity:  $f(t+1)^2 \leq f(t)f(t+2)$.  A mathematical result, essentially due to \citet{montel1928fonctions}, establishes that this condition is satisfied if and only if for every $\beta >1$, $2\beta^{t+1}f(t+1)\leq \beta^tf(t)+\beta^{t+2}f(t+1)$.  This latter (additively stated) characterization forms the basis of compound-interest convexity.} 

The other two notions we shall introduce are related to aggregating discount factors. The point will be that a discount factor satisfies decreasing impatience if and only if it is the aggregate of some basic parametric models of discounting. 

The first method of aggregation is the geometric mean. Given a finite or countable collection of discount factors $f_s$, a \df{geometric mean} is $\prod_s f_s(t)^{\eta_s}$, for some $\eta_s>0$ with $\sum_s \eta_s=1$. Importantly, in Section~\ref{sec:multiplicativeaggregators} we show that the geometric mean of a finite number of discount factors is the unique aggregation rule that uniquely satisfies a notion of time consistency, together with some standard normative axioms.

A second, perhaps unexpected, connection to decreasing impatience comes from the method of ``parimutuel aggregation'' introduced by \cite{eisenberg1959consensus}.\footnote{The idea in this paper is generalized in \cite{eisenberg1961aggregation}.} The idea is to use a market mechanism (or a pseudomarket, where agents use exogenously given incomes to purchase goods) and have the discount factor arise as an equilibrium price.

A \df{parimutuel economy}  is a collection $(\succeq_i,\income_i)_{i\in I}$, where $I$ is finite or countable,  each $\succeq_i$ being a preference relation in $\P^S$ (meaning a stationary preference over streams in $\ell^\infty_+$), and $\income_i>0$  satisfying that $\sum_{i\in I} \income_i$ is finite. In words, a parimutuel economy consists of a set $I$ of agents with exponential preferences, and strictly positive income $\income_i$, such that aggregate income $\sum_i \income_i$ is well defined. 

We restrict attention to parimutuel economies with a unit supply of ``good,'' or money, per period. An \df{allocation} in a parimutuel economy is a collection $x=(x_i)_{i\in I}$ of sequences in $\ell^\infty_+$ with the property that \[ 
\sum_{i\in I} x_i(t)=1
\] for all $t$.\footnote{Our results do not depend on the assumption that $\sum_{i\in I}x_i(t) = 1$, but rather that $\sum_{i\in I}x_i(t) = \bar\w_t$, for some fixed supply $\w_t>0$,  and that all relevant consumption streams are in the domain.  We maintain the assumption of unit supply for analytical convenience.}

A \df{parimutuel equilibrium} in $(\succeq_i,\income_i)_{i\in I}$ is a pair $(p^*,x^*)$ in which $x^*$ is an allocation and  $p^*\in\ell^1_+$ is a sequence of prices, for which $x^*_i$ is maximal for preference $\succeq_i$ in the budget set 
\[ \{x\in\ell^\infty_+:\sum_t p(t) x(t) \leq \income_i \}.
\]

Finally, for any two sequences $f$ and $g$, $f\propto g$ means that they are proportional to each other. So there is some $\alpha > 0$ for which $f = \alpha g$.  Observe that if $f$ and $g$ are discount factors, then $f \propto g$ means that they represent the same $\succeq$. In particular, our main results hold if we normalize discount factors so that $f(0)=1$; perhaps a natural normalization.

\subsection{Characterization of decreasing impatience}

\begin{theorem}\label{thm:main1}
Let $\succeq$ be a preference in $\P$, with associated discount factor $f$, and suppose that $f(t+1)/f(t)$ is bounded away from 1. 
The following statements are equivalent:
\begin{enumerate}
    \item\label{thm1it1} $\succeq$ satisfies decreasing impatience.
    \item\label{thm1it2} $\succeq$ satisfies compound-interest convexity.
    \item \label{thmit2.5} $f$ is proportional to the (finite or countable) geometric mean of generalized $\beta$-$\delta$ discount factors. That is, there are $\beta$-$\delta$ discount factors $f_s$, and $\eta_s>0$ with $\sum_s \eta_s=1$, such that  $f(t)\propto \prod_s f_s(t)^{\eta_s}$.
    \item\label{thm1it3} There exists a parimutuel economy, and a parimutuel equilibrium $(p^*,x^*)$ in this economy, for which $p^*_t=f(t)$ for all $t$. 
\end{enumerate}
\end{theorem}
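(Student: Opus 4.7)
I would prove the four-way equivalence as three bidirectional arguments anchored at condition~(1). For (1)$\Leftrightarrow$(2), dividing the compound-interest convexity inequality by $k(1+r)^t$ and setting $\beta=1+r$ converts the axiom to $\beta^{-1}f(t-1)+\beta f(t+1)\geq 2f(t)$, required to hold for every $\beta>1$ and every $t\geq 1$. By the Montel-style characterization recorded in the paper's own footnote, this family of inequalities is equivalent to log-convexity $f(t)^2\leq f(t-1)f(t+1)$, i.e., $f(t)/f(t-1)\leq f(t+1)/f(t)$, which is exactly decreasing impatience. The direction from log-convexity to the axiom family is one line of AM--GM; the reverse follows by substituting the minimizing $\beta=\sqrt{f(t-1)/f(t+1)}\geq 1$, with the edge case $\beta=1$ handled by weak monotonicity of $f$.

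For (1)$\Leftrightarrow$(3), the direction (3)$\Rightarrow$(1) is immediate: each generalized $\beta_s$-$\delta_s$ factor has successive ratio $f_s(t+1)/f_s(t)\in\{\beta_s\delta_s,\delta_s\}$ with the smaller value for $t<t^*_s$, and the geometric mean preserves the ``weakly-increasing-ratio'' property. For (1)$\Rightarrow$(3), set $r_t=f(t+1)/f(t)$ and let $\bar L=\lim_t \log r_t$, which is finite and strictly negative thanks to the boundedness-away-from-$1$ hypothesis. Define jumps $w_s=\log r_s-\log r_{s-1}\geq 0$ for $s\geq 1$; these telescope to $\sum_{s\geq 1}w_s=\bar L-\log r_0$, a finite positive number. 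I would take one generalized $\beta_s$-$\delta_s$ factor per $s\geq 1$ with $w_s>0$, with $t^*_s=s$, common $\delta_s=e^{\bar L}\in(0,1)$, any positive weights $\eta_s$ with $\sum_s \eta_s=1$, and $\log \beta_s=-w_s/\eta_s\in(-\infty,0]$. A direct calculation shows the successive log-ratio of the geometric mean equals $\bar L-\sum_{s>t}w_s=\log r_t$ for every $t$, so the geometric mean matches $f$ up to the scalar $f(0)$.

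For (1)$\Leftrightarrow$(4), to show (1)$\Rightarrow$(4) I build a parimutuel economy indexed by $s\in\Na$: agent $s$ is an exponential discounter with $\delta_s=r_s\in(0,1)$ and income $w_s=f(s)$, so $\sum_s w_s<\infty$ by summability. Take $x^*_s(t)=\mathbf{1}_{s=t}$ and prices $p^*_t=f(t)$; market clearing and binding budgets are automatic, and agent $s$'s optimality reduces to $r_s^{t-s}\leq f(t)/f(s)$ for every $t$, which decreasing impatience delivers by comparing $r_s$ to $r_u$ for $u\geq s$ (when $t>s$) and for $u<s$ (when $t<s$). Conversely, in any parimutuel equilibrium with $p^*=f$, market clearing forces some agent $i$ to consume at each period $t$, so by complementary slackness $\delta_i^t/f(t)=\max_s \delta_i^s/f(s)$; this is the subgradient inequality $\log f(s)\geq \log f(t)+(s-t)\log \delta_i$ for all $s$, and applying it at $s=t\pm 1$ sandwiches $\log \delta_i$ between $\log r_{t-1}$ and $\log r_t$, forcing $r_{t-1}\leq r_t$.

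The main obstacle is the explicit decomposition in (1)$\Rightarrow$(3): the parameters $\beta_s$, $\delta_s$, $t^*_s$, and $\eta_s$ must be coordinated so that each factor is a valid generalized $\beta$-$\delta$ (so $\beta_s,\delta_s\in(0,1]$), the geometric mean converges termwise (requiring $\sum_s w_s<\infty$, which is exactly where the boundedness-away-from-$1$ hypothesis enters, since it gives $\bar L<0$ and hence $\delta_s<1$), and the log-ratios match $\log r_t$ pointwise. The parimutuel direction is delicate for a related reason: the Dirac-type allocation $\mathbf{1}_{s=t}$ works precisely because decreasing impatience places each agent $s$'s bang-for-buck maximum exactly at $t=s$.
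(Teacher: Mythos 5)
Your proposal is correct and follows essentially the same route as the paper: the same log-convexity pivot for (1)$\Leftrightarrow$(2), the same second-difference decomposition of $\log f$ for (1)$\Leftrightarrow$(3) (your jumps $w_s$ are exactly the coefficients $\alpha(s)$ that the paper extracts via its Lemma representing $h(t)=\sum_s \alpha(s)\min\{s,t\}$), and the same Dirac-allocation, supporting-line parimutuel construction with incomes $\income_i=f(i)$ for (1)$\Leftrightarrow$(4), including an identical converse via bang-for-buck inequalities at $s=t\pm 1$. The only differences are presentational: you telescope the ratio jumps inline with a common $\delta_s=e^{\bar L}$ rather than invoking a separate lemma and distributing $\gamma^t$ across factors as $\gamma^{t/2^{s+1}}$, and you fix the particular supporting slope $\log r_s$ where the paper allows an arbitrary supporting affine function of $\log f$.
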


\begin{remark}Observe that in order to falsify compound-interest convexity, it is sufficient to find a single observation \[ 
\left(\frac{k}{2}(1+r)^{t-1},t-1\right)  + \left(\frac{k}{2}(1+r)^{t+1},t+1\right) \prec  (k(1+r)^{t},t).
\] The significance of this observation is that, in a framework of dated rewards, decreasing impatience requires at least two observations to falsify.  Multiple observations of choices in experimental economics usually require resorting to some type of random problem selection, see \emph{e.g.} \citet{azrieli2018incentives}, and thus committing to a theory of behavior over random outcomes. By having an axiom that can be falsified with a single observation, there is no need to add assumptions about how experimental subjects treat random outcomes. The tradeoff is that we must commit to a theory over consumption streams for compound-interest convexity to be meaningful.
\end{remark}

\begin{remark} The $\beta$-$\delta$ model is popular in behavioral economics as a tractable approximation to the hyperbolic discount factor (see for example \cite{laibson1997golden} or \cite{diamond2003quasi}).\footnote{See \citet{loewenstein1992anomalies} and the references therein for a discussion of hyperbolic discounting.  A form of the $\beta$-$\delta$ model was axiomatized by \citet{hayashi2003quasi}.} In contrast, the equivalence between~\eqref{thm1it1} and~\eqref{thmit2.5} means that  generalized $\beta$-$\delta$ preferences are, in a sense, the canonical model of decreasing impatience. Given that  we can turn any $\beta$-$\delta$ preference into a generalized $\beta$-$\delta$ preference by suitably redefining the length of the initial time period of each agent, the equivalence in Theorem~\ref{thm:main1} implies that one may always think of a discount factor satisfying decreasing impatience as an  aggregate  of $\beta$-$\delta$ discount factors. Moreover, as emphasized by Theorem~\ref{thm:aggregatorgeomean} below, the geometric mean as an aggregator of discount rates satisfies a notion of time consistency.  (As will become apparent below, this aggregator makes the most sense when restricting to $\mathcal{D}$.)
\end{remark}

\begin{remark} Given the results by \cite{weitzman2001gamma} and \cite{jackson2015collective}, one might wonder if all discount factors with decreasing patience might not be obtained through utilitarian aggregation of exponential discount factors. It is, however, easy to see that standard quasi-hyperbolic discount factor cannot be obtained in this fashion. Indeed if this were the case then there would be a (potentially countable) collection of discount factors $\da_i$, with weights $a_i$ such that $\beta \da^t=\sum_i a_i \da^t_i$ for all $t>1$. Put differently,  $\beta=\sum_i a_i (\da_i/\da)^t$, so the right-hand side is constant in $t$, which is only possible if $\da_i=\da$ for all $i$. See Section~\ref{sec:linearaggregation} for more on what is possible with utilitarian aggregation.
\end{remark}

\begin{remark}An inspection of the proof of Theorem~\ref{thm:main1} establishes that decreasing impatience implies something apparently stronger (but equivalent to) than compound-interest convexity.  In particular, decreasing impatience implies that for all $k>0$, all $t\geq 1$ and all $\beta > 0$, \[\left(\frac{k}{2}\beta^{t-1},t-1\right)  + \left(\frac{k}{2}\beta^{t+1},t+1\right) \succeq  (k\beta^{t},t).\]
This stronger hypothesis provides further simple tests for refuting the hypothesis of decreasing impatience.
\end{remark}

\begin{remark}
The hypothesis that $f(t+1)/f(t)$ is bounded away from 1 guarantees that each of the $\beta$-$\delta$ preferences obtained in~\eqref{thmit2.5} has exponents that are strictly smaller than 1. If we only assume that $f$ is strictly decreasing, the remaining equivalences in the theorem continue to hold.
\end{remark}

\begin{remark}On condition~\eqref{thm1it3}:  in fact, any collection of agents possessing decreasingly impatient discount factors and participating in a parimutuel market will generate prices that also exhibit decreasing impatience in any equilibrium.  Thus, if we only observe prices, we cannot rule out that all agents in the economy are decreasingly impatient.  Our result claims that we also cannot preclude the fact that all agents in the economy are additionally exponential discounters.

Our result follows from the expression for equilibrium prices in linear economies, which says that prices are the upper envelope of agents' (scaled) utility indexes. The fact that equilibrium prices exhibit decreasing impatience then follows because log convex functions are suprema of exponential functions. It is also worth mentioning that, under our assumptions, equilibrium prices are unique.  These issues, and the uniqueness result for this model with infinitely many agents, are discussed in the Appendix in Section~\ref{sec:generalmodel}.\end{remark}

\section{Multiplicative aggregators and dated rewards}\label{sec:multiplicativeaggregators}

In this section, we shed light on Statement~\eqref{thmit2.5} in Theorem~\ref{thm:main1} by providing a foundation for the geometric average of discount factors as a preference-aggregation method in intertemporal choice. We focus in this section on the classical framework of dated rewards, as in \citet{fishburn1982time}.  This is a smaller domain than the domain of consumption streams.  

We discuss a formal model of preference aggregation, and establish the class of multiplicative aggregators as the unique ones satisfying a collection of properties.  The idea here is that utility is common, and consumption is public, but a collection of agents have idiosyncratic preferences over the common utility streams.  That is, agents are asked to rank streams; the only disagreements are about discount factors.  We envision the exercise here as making the most sense for the domain $\mathcal{D}$ of dated rewards. We imagine that the goal is to aggregate a group of individual discount factors into a social one, and impose several properties on how this aggregation takes place.  Key amongst our assumptions are a Pareto property \emph{for dated rewards only} and a time consistency property.

Let us denote by the set of discount factors by \[\mathcal{NI} \equiv \{f: \Na \to \Re_+ : f \text{ is non-increasing and } f(0)=1\}.\]  Here for simplicity we focus on dated rewards, and therefore it is enough to consider non-increasing discount factors.  That $f(0)=1$ reflects a basic normalization.

For any time period $t$ and $f\in \mathcal{NI}$, define $f^t\in \mathcal{NI}$, the \df{$t$-shifted} version of $f$, by $f^t(s) = \frac{f(t+s)}{f(t)}$.  The $t$-shifted version of $f$ is the discount factor that would obtain if the decisions made using $f$ for period $t+s$ would be revisited after $t$ periods have passed. With the definition of $f^t$, after being revisited in period $t$, any decisions would be maintained. 

Given is a finite set of agents $M\equiv \{1,\ldots,m\}$, indexed by $i\in M$.  An \df{aggregator} is a function $\varphi:\mathcal{NI}^M\rightarrow\mathcal{NI}$.  The idea behind an aggregator is that there is a social preference, which takes the same form as individual preference (so it is represented by a discount factor), and is determined as a function of individuals' discount factors.  In general, the aggregate  discount factor evaluated at a particular time $t$ could depend on the entire sequence of discount factors for every individual agent.

An aggregator $\varphi$ is a \df{geometric mean} if there exists  $\eta_i >  0$ for each $i\in M$ such that  $\sum_i \eta_i = 1$  and \[ 
\varphi(f_1,\ldots,f_m)(t)=\prod_{i\in M}f_i(t)^{\eta_i},
\] for all $(f_1,\ldots,f_m)\in {\mathcal{NI}}^M$.\footnote{If we reformulate our model in log terms, so that $f_i(t) = \exp{g_i(t)}$, then the geometric mean of $f_i(t)$ corresponds to the arithmetic mean of $g_i(t)$.} 

We postulate the following axioms:

\begin{itemize}
\item (Pareto) If for all $i\in M$, $f_i(t)x\geq f_i(s)y$, then $\varphi(f_1,\ldots,f_n)(t)x \geq \varphi(f_1,\ldots,f_n)(s)y$, with a strict inequality if any individual inequality is strict.
\item (Independence of Irrelevant Alternatives) For any $t,s>0$.  For all $f,f'\in\mathcal{NI}^M$, if for all $i\in M$ and all $x,y\in\Re_{++}$: $f_i(t)x \geq f_i(s) y$ iff $f'_i(t)x\geq f'_i(s) y$, then for all $x,y\in\Re_{++}$, $\varphi(f)(t)x\geq \varphi(f)(s)y$ iff $\varphi(f')(t)x \geq \varphi(f')(s)y$.
\item (Time consistency) For all $t\geq 0$, $\varphi(f^t_1,\ldots,f^t_m)= \varphi(f_1,\ldots,f_m)^t$.
\end{itemize}

The Pareto axiom says that if all agents agree on the ranking of a dated reward, then this ranking should be respected by the aggregate discount factor. Independence of irrelevant alternative demands that in making an aggregate comparison between dated rewards involving dates $t$ and $s$, only the agents' discount factors involving those two dates should matter. Given the role of our next result in Theorem~\ref{thm:main1}, we want to emphasize the Time consistency axiom. The terminology is inspired by \citet{halevy2015time}.\footnote{We thank the anonymous referees for suggesting this terminology for the property.}

Suppose that we ask individuals about their preference between two dated rewards when $t$ periods have passed, each of them treating time $t$ as if it were the new period $0$. And suppose that the aggregator then judges $(x,s)$ to be preferred to $(x',s')$. Time consistency requires that the aggregator at the original time 0 should judge $(x,s+t)$ to be preferred to $(x',s'+t)$. Otherwise a plan for choosing $(x',s'+t)$ over $(x,s+t)$ would be reversed when time $t$ arrives.\footnote{Strictly speaking, this axiom is an axiom on the aggregator, and is therefore not formally the same as the concept in \citet{halevy2015time}, which does not speak about aggregators.  It essentially requires two things:  first, that the social preference at time $t$ can be derived from applying the time $0$ aggregator to the individual preferences at time $t$.  Second, it requires that this social preference is consistent in the sense of \citet{halevy2015time}.}

The remaining axioms should be familiar. 
Pareto is the usual Pareto efficiency axiom restricted to dated rewards:  were we to apply Pareto to a domain of streams, we would end up in an environment similar to \cite{harsanyi1955cardinal}, whereby aggregation would be additive. Finally 
Independence of Irrelevant Alternatives (IIA) is a version of Arrow's IIA: it says that given a period $t$, the aggregator should only use information about the sets of agents that ranks a date $t$ reward against a period $0$ reward.

\begin{theorem}\label{thm:aggregatorgeomean}An aggregator satisfies Pareto, IIA, and Time consistency if and only if it is a geometric mean. \end{theorem}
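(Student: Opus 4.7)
The plan has a direct sufficiency part and a more substantive necessity part. For sufficiency, the check is a routine computation. If $\varphi(f)(t) = \prod_i f_i(t)^{\eta_i}$ with $\eta_i > 0$ summing to $1$, then Pareto follows by taking a positive convex combination (with the $\eta_i$'s) of the inequalities $\log(x/y) \geq \log(f_i(s)/f_i(t))$, strictness being preserved since $\eta_i>0$; IIA is immediate since $\varphi(f)(s)/\varphi(f)(t) = \prod_i (f_i(s)/f_i(t))^{\eta_i}$ depends only on the individual ratios; and Time consistency follows from $\prod_i (f_i(t+s)/f_i(t))^{\eta_i} = \varphi(f)(t+s)/\varphi(f)(t)$.

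For necessity, the first move is to reduce the aggregator to a single one-step function. By IIA, $\varphi(f)(s)/\varphi(f)(t)$ is a function of the vector $(f_i(s)/f_i(t))_i$ alone; call it $H_{t,s}$. Applying Time consistency to the shifted profile $f^t$, combined with $\varphi(f^t)(s) = \varphi(f)(t+s)/\varphi(f)(t)$, yields $H_{0,s} = H_{t,t+s}$, so $H_{t,s}$ depends only on the gap $s-t$. Write $H_k$ for the common value and set $h := H_1$. Telescoping across unit steps then gives
\[
\varphi(f)(n) = \prod_{k=1}^n h\bigl(f(k)/f(k-1)\bigr),
\]
so it suffices to determine $h:(0,1]^M \to (0,1]$.

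To pin down $h$, compute $\varphi(f)(2)$ in two ways: directly as $H_2(f(2))$, and via the telescoped product $h(f(1))\cdot h(f(2)/f(1))$. Since $f(1)$ may be chosen to be any $a$ with $f(2) \leq a \leq 1$ coordinatewise, this forces $H_2(d) = h(a)\,h(d/a)$ for all such $(a,d)$. Passing to log coordinates $u=-\log a$, $v = -\log(d/a)$ and defining $\phi(u):= -\log h(e^{-u})$, this rearranges to the multivariate Cauchy equation $\phi(u+v) = \phi(u) + \phi(v)$ on $\Re_+^M$ (using $\phi(0) = 0$, which follows from Pareto via $h(1,\ldots,1) = 1$). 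Pareto, applied both as stated and with the roles of $(t,x)$ and $(s,y)$ swapped, delivers the envelope $\min_i u_i \leq \phi(u) \leq \max_i u_i$, with both inequalities strict whenever $u$ is non-constant. The upper bound makes $\phi$ bounded on bounded subsets of $\Re_+^M$, and the classical Cauchy argument (applied coordinate-by-coordinate via $\psi_j(t) = \phi(te_j)$ and then reassembled using additivity) forces $\phi$ to be linear: $\phi(u) = \sum_i \eta_i u_i$. Applying the strict envelope to $u = u_j e_j$ gives $\eta_j \in (0,1)$, and evaluating at $u = (1,\ldots,1)$ gives $\sum_i \eta_i = 1$. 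Inverting the log substitution yields $h(a) = \prod_i a_i^{\eta_i}$, and the telescoping formula then collapses to $\varphi(f)(n) = \prod_i f_i(n)^{\eta_i}$, completing the proof. I expect the main technical obstacle to be the Cauchy step: it is precisely Pareto's envelope bound — rather than any a priori measurability or continuity assumption — that rules out the pathological additive solutions and forces linearity.
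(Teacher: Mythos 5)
Your proof is correct and takes essentially the same route as the paper's: IIA collapses the aggregator to a single finite-dimensional function (your one-step $h$ coincides with the paper's $\varphi^*$, thanks to the normalization $f_i(0)=1$), time consistency delivers the identical multiplicative Cauchy equation $h(d)=h(a)\,h(d/a)$, and Pareto supplies the regularity that pins down the linear solution $\phi(u)=\sum_i \eta_i u_i$ with $\eta_i>0$ and $\sum_i \eta_i=1$. The only differences are cosmetic: you identify the ratio functions $H_{t,s}$ across dates by time consistency and telescope, where the paper identifies the level functions $\varphi_t$ across dates by Pareto, and you solve the Cauchy equation via the Pareto min--max envelope (boundedness) where the paper uses monotonicity.
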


\begin{remark}
The geometric mean of summable discount factors may not be summable, but if the discount factors in question are generalized $\beta$-$\delta$ with $\delta<1$ then their geometric mean is guaranteed to be summable. In any case, summability is not needed to rank dated rewards, which has been our focus in this section.
\end{remark}

\begin{remark}
The axioms in Theorem~\ref{thm:aggregatorgeomean} are independent.  Pareto is violated by a  constant $\varphi$ which always returns the sequence $(1,\beta,\beta^2,\ldots)$ for some $0<\beta<1$.  IIA is violated by an aggregator $\varphi$ whose behavior depends on the tails of the  $f_i$.  For example, pick $i,j\in M$, $i\neq j$.  If $\lim_t \frac{f_1(t)}{f_2(t)}=0$, define $\varphi(f_1,\ldots,f_m)(t)=\prod f_i(t)^{\alpha_i}$, otherwise define $\varphi(f_1,\ldots,f_m)(t)=\prod f_i(t)^{\beta_i}$, where $\beta \neq \alpha$.  Finally, time consistency is violated by the rule $\varphi(f_1,\ldots,f_m)(t) = \frac{\sum_i f_i(t)}{m}$.
\end{remark}

\begin{remark}
Observe that Theorem~\ref{thm:aggregatorgeomean} allows arbitrary discount factors, but establishes a unique method of aggregation.  Theorem~\ref{thm:main1} shows that a certain class of functions, the $\beta$-$\delta$ ones, form a kind of ``basis'' of the decreasing impatience discounts for this method of aggregation.
\end{remark}

\section{Discussion and conclusion}\label{sec:discussion}

\subsection{Linear aggregation}\label{sec:linearaggregation}
Many previous studies have focused on linear aggregation of exponential discount rates \citep{zuber2011aggregation,jackson2015collective}. Our Theorem~\ref{thm:main1} provides some alternative representations, but it turns out that it is possible to use related ideas to obtain a linear representation for any discount factor that displays decreasing impatience. That is, a statement analogous to the equivalence between~\eqref{thm1it1} and \eqref{thmit2.5} in the theorem, but with a linear function instead of a multiplicative one. The representation is not, however, in terms of exponential or $\beta$-$\delta$ discount factors. As we have already pointed out, it is in general impossible to obtain linear representation in terms of exponential discount factors. A variation on the argument in Section~\ref{sec:main} shows that it is also impossible to obtain a representation in terms of generalized $\beta$-$\da$ discount factors.

A very basic insight behind Theorem~\ref{thm:main1} is that the set of discount factors that satisfy decreasing impatience is convex, and so can be represented in terms of its extreme elements. Indeed, the proof of Theorem~\ref{thm:main1} reveals that $f$ and $g$ satisfy decreasing impatience if and only if $\beta^t f(t)$ and $\beta^t g(t)$ are convex functions, for $\beta>1$. But then when $\la\in (0,1)$, $\beta^t (\la f(t)+ (1-\la) g(t))$ is convex; establishing the convexity of the set of discount factors with the decreasing impatience property. 

Now,  using results from \cite{langberg1980extreme}, one can show that for each of these extreme elements, there is a (potentially infinite)  increasing sequence of discount factors, $(\beta_1,\beta_2,\ldots)$.  Each discount factor, except possibly the first one, is used for \emph{at least two consecutive periods}, meaning that for each $\beta_l$, there are two periods $t,t+1$ for which $\frac{f(t+2)}{f(t+1)}=\frac{f(t+1)}{f(t)}=\beta_l$.  As these form extreme rays of the relevant class of discount factors, classical Paretian aggregation assuming linearity (as in \citep{harsanyi1955cardinal}) would mean that these discount factors form ``canonical'' ones from which all others can be built linearly. This establishes another kind of representation.  The extremal discount factors figuring in this representation are of course a proper superset of the ones invoked in part \eqref{thmit2.5} of Theorem~\ref{thm:main1}. We have chosen to emphasize the generalized $\beta$-$\delta$ discount factors because of their connection to popular models in behavioral economics.

\cite{langberg1980extreme} is devoted to decreasing failure life rate distributions.  A decreasing failure rate in their paper is determined by log-convexity of the decumulative distribution function.  The authors in that paper characterize the extreme points of the log-convex decumulative distribution functions.  Our paper instead focuses on decreasing sequences of discount factors, but up to scale the mathematics behind the two concepts are identical:  a decreasing nonnegative sequence that satisfies log-convexity.  In the context of \citet{langberg1980extreme}, summability is not a focus, but otherwise the concepts are the same; and a close inspection of their arguments establishes that summability poses no special issue.

\subsection{Sequences of transformations}

Condition~\eqref{thm1it2} of Theorem~\ref{thm:main1} would allow us to provide a characterization of pairs $x,y\in\ell^{\infty}$ for which $x \succeq y$ for every $\succeq\in \P$ (we would additionally need to introduce linear inequalities asserting that discount rates are nonincreasing).  Such a result would claim that $x\succeq y$ for all $\succeq\in \P$ iff $x$ arises from $y$ from a sequence of transformations; analogous to mean-preserving spreads as in \citet{rothschild1970increasing}.  A similar exercise appears in \citet{chambers2020pareto}.

\subsection{Increasing impatience} There is some empirical support for increasing impatience, meaning that there are environments in which some subjects display increasing impatience. It is possible to derive analogous results to ours for this property. In particular, note that the starting point for our main results is the fact that decreasing impatience is equivalent to  the log-convexity of the discount factor. For increasing impatience, one would instead analyze log-concavity. It is then possible to derive results along the lines of the first two characterizations in our Theorem~\ref{thm:main1}.\footnote{We thank Peter Wakker for pointing out us to the relevant empirical literature on increasing impatience.}

\subsection{Conclusion}

Our paper presents a general non-parametric analysis of decreasing impatience as a property of discount factors in intertemporal choice. We have considered its testable implications, in terms of a simple behavioral axiom, and its foundation as an aggregate of more basic parametric models of discounting. 

The results have been developed in the context of discrete time, and taking as given a fixed utility  representation over the underlying physical outcomes. A natural next step is to relax these restrictions, and consider at the same time the problem of aggregating per-period utilities, as well as intertemporal tradeoffs. Another interesting question relates to the family of transformations of a utility stream that preserves preference, for any preference that satisfies decreasing impatience. This would allow for a characterization of all the binary comparisons that any discount factor satisfying decreasing impatience would agree on, along the lines of the exercise in \cite{chambers2020pareto}.

\clearpage

\section{Appendix}\label{sec:proofs}
\subsection{General parimutuel markets}\label{sec:generalmodel}
Equilibria in parimutuel economies may be viewed as solutions to a particular kind of social welfare maximization problem. Indeed, \cite{samuelsonaggregation1956} proposed a general aggregation procedure whereby a representative consumer arises from the maximization of a social welfare functional. With the right prices, these solutions can be decentralized, as in the second welfare theorem, to be consistent with individual optimizing behavior. For parimutuel, or Eisenberg-Gale, aggregation in our context, the  social welfare function in question is the so-called Nash welfare \citep{nash1950bargaining}\[ 
W((u_i)_{i\in I}) = \prod_{i\in I}u_i^{\income_i}, 
\] where $u_i(x) = \sum_t x_t \da^t_i$ represents $\succeq_i$. The social welfare maximization program is then \[ 
\begin{array}{cc}
\max_{x_i\in\ell^\infty_+}     & W((u_i)_{i\in I})  \\
\text{s.t}     & \sum_i x_{i,t} = 1 \text{ for all }t.
\end{array}
\] The equilibrium allocations identified in Theorem~\ref{thm:main1} solve this maximization problem, and equilibrium prices take the form of the upper envelope of ``weighted'' versions of the agents discount factors. An illustration is provided in Figure~\ref{fig:parimutueleqprice}.

\begin{figure}[htt]
    \centering
\pgfplotsset{ticks=none}
\begin{tikzpicture}[scale=1.1]
\begin{axis}[
          xmax=10,ymax=1,ymin=0,xmin=0,
          axis x line=bottom,
          xlabel style={color=black},
          xlabel = {\footnotesize $t$},
          axis y line=left,
          color=black!25,
          restrict y to domain=-7:12,
          enlarge x limits]
\addplot[green!80!black,domain=0:10,very thin,samples=100]  {pow(.3,x)} node[above]{};
\addplot[blue,domain=0:10,very thin,samples=100]  {0.65*pow(.6,x)} node[above]{};
\addplot[red,domain=0:10,very thin,samples=100]  {0.3*pow(.8,x)} node[above]{};
\addplot[black,domain=0:10,samples=100] {max(pow(.3,x),0.65*pow(.6,x),0.3*pow(.8,x))} node[above]{\footnotesize $f(t)$};
\end{axis}
\end{tikzpicture}
    \caption{Parimutuel equilibrium price with three exponential discount factors: {\color{green!80!black}$\da_1$} $<$ {\color{blue}$\da_2$} $<$ {\color{red}$\da_3$}.}
    \label{fig:parimutueleqprice}
\end{figure}
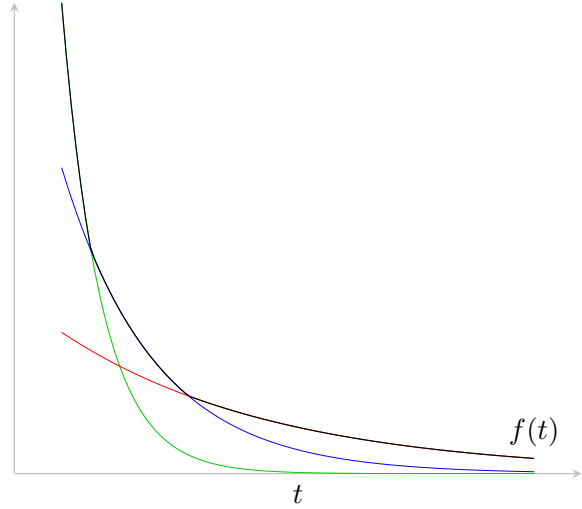

In the figure, there are three agents with exponential discount factors. The equilibrium prices are indicated in black, as the pointwise maximum of the agents' weighted discount factors. It should be clear from the picture that the price exhibits decreasing impatience. Theorem~\ref{thm:main1} says that any discount factor with this property can be interpreted as such an equilibrium price. 

Now, in  Theorem~\ref{thm:main1}, the discount factor was taken as the primitive starting point. In contrast, in this section we take a population of agents $N$ as the starting point. Each of the agents $i\in N$ have preferences in $\P$, and  we consider an aggregate discount factor obtained through parimutuel aggregation.  The next result concerns the structure of the set of possible prices for parimutuel equilibria with a given set of preferences.  We study the set of possible prices as incomes vary.

The result we present is very general, covering environments with both continuous and discrete time. The reason for bothering with this level of generality is that it is usually much easier to compute examples in continuous time, so we want to have a result that can be applied to a continuous time model. On the other hand, the main results of the paper were stated for an environment with discrete time, and we want a result that applies to the same environments as Theorem~\ref{thm:main1}.\footnote{It is, however, not too difficult to obtain a version of Theorem~\ref{thm:main1} for continuous time. At least versions of the equivalence between statements~\eqref{thm1it1}, \eqref{thm1it2} and ~\eqref{thm1it3}.} In the end, it turns out that there is a common structure that works quite generally. 

Let $(\Omega,\Sigma)$ be a measurable space, and for each $i\in N$, let $\da_i$ be countably additive probability measure on $(X,\Sigma)$. We assume that the set $\{\da_i\}_{i\in N}$ is mutually absolutely continuous. The discrete time model is obtained when  $(\Omega,\Sigma)=(\Na,2^{\Na})$ and $\da_i$ is identified with an exponential measure on $2^\Na$ (that is, with a number $\hat \da_i\in (0,1)$ so that $\da_i(A) = \sum_{t\in A}(1-\hat \da_i)\hat \da^t_i$).  The continuous-time model is obtained when  $(\Omega,\Sigma)=(\Re_+,\mathbf{B})$, where $\mathbf{B}$ is the Borel $\sigma$-algebra on $\Re_+$ and $\da_i$ is an exponential probability measure on $\mathbf{B}$.

In a parimutuel market, $\income_i\geq 0$ denotes $i$'s wealth. Here we assume that $\sum_{i\in N} \income_i > 0$.  An \df{economy} then consists of probabilities and wealth.  An \df{allocation} consists of, for each $i\in N$, $x_i \in \L_+^{\infty}(\Omega,\Sigma)$, for which $\sum_{i\in N} x_i = \mathbf{1}$.\footnote{Of course, we could also describe the preferences induced by the probability measures $\da_i$ as we did in the preceding sections.}

A \df{parimutuel equilibrium} is a pair $(p^*,x^*)$, consisting of a finite non-negative measure\footnote{We do not impose countable additivity. In fact, this countable additivity will be shown to be a consequence of equilibrium.} $p^*$ and an allocation $x^*=\{x^*_i\}_{i\in N}$ for which for all $i\in N$:

For all $g\in L_{+}^{\infty}(X,\Sigma)$, $\int g dp^* \leq w_i$ implies $\int g d\da_i \leq \int x^*_i d\da_i$.

Now, for any measure $\da_i$ and any scalar $\al_i$, $\al_i \da_i$ denotes the scalar multiple of the measure.  Then $\bigvee_{i\in N}\al_i \da_i$ denotes the join of the measures in the pointwise dominance order. See, for example, \citet{guide2006infinite}, Theorem 10.56.

In particular, $p = \bigvee_i \al_i \da_i$ exactly when there is a measurable partition $\{E_1,\ldots,E_n\}$ of $\Omega$ for which
\begin{enumerate}
\item For all $E\in\Sigma$ and all $i\in N$, $p(E)\geq \al_i \da_i(E)$.
\item For all $E\in \Sigma$, $p(E) = \sum_{i\in N}\al_i \da_i(E\cap E_i)$.
\end{enumerate}

\begin{proposition}\label{prop:join}Suppose given $\{\da_i\}_{i\in N}$ and $\{\income_i\}_{i\in N}$ for which $\sum_{i\in N} \income_i > 0$.  For any equilibrium $(p^*,x^*)$ of the corresponding economy, there is $\al_i\geq 0$, with $\sum_{i\in N} \al_i >0$ for which $p^* = \bigvee_{i\in N}\al_i \da_i$.  Conversely, if there are $\al_i \geq 0$ for which $\sum_{i\in N} \al_i > 0$ and $p^* = \bigvee_{i\in N}\alpha_i \da_i$, then there for all $i\in N$, there is $\income_i\geq 0$ with $\sum_i \income_i >0$ for which $p^*$ constitutes an equilibrium price in the resulting economy.\end{proposition}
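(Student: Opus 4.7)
The plan is to tie the join representation to the first-order conditions of each agent's linear utility-maximization problem at equilibrium, and conversely to construct an equilibrium allocation from a partition that certifies a given join representation. The linear structure of both objectives and budget constraints is what makes both directions go through.

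\textbf{Necessity direction.} Fix an equilibrium $(p^*, x^*)$. Each agent $i$ with $\income_i > 0$ solves
$$\max_{x \in L^\infty_+(\W, \Sigma)} \int x \, d\da_i \quad \text{subject to} \quad \int x \, dp^* \leq \income_i.$$
Because both the objective and the constraint are linear, the optimal value is finite only if there exists $\al_i \geq 0$ such that $\al_i p^* \geq \da_i$ as measures; this $\al_i$ plays the role of the Lagrange multiplier on the budget. Complementary slackness then forces the support of $x^*_i$ to lie in the set $E_i$ where equality holds---in Radon-Nikodym terms, where $d\da_i/dp^*$ attains its essential supremum $\al_i$. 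For agents with $\income_i = 0$, set $\al_i = 0$ and $E_i = \os$. Market clearing $\sum_i x^*_i = \mathbf{1}$ forces $\bigcup_i E_i = \W$ up to a $p^*$-null set, and we may refine to a measurable partition. This yields $p^* = \bigvee_{i \in N} \al_i \da_i$, with $\sum_i \al_i > 0$ because at least one $\income_i$ is positive.

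\textbf{Sufficiency direction.} Given $\al_i \geq 0$ with $\sum_i \al_i > 0$ and $p^* = \bigvee_{i\in N} \al_i \da_i$ certified by a measurable partition $\{E_i\}$, set $x^*_i = \mathbf{1}_{E_i}$ and define $\income_i = \int x^*_i \, dp^* = p^*(E_i) = \al_i \da_i(E_i)$. To verify optimality, take any $g \in L^\infty_+$ with $\int g \, dp^* \leq \income_i$. Using $\al_i \da_i \leq p^*$ gives
$$\al_i \int g \, d\da_i \leq \int g \, dp^* \leq \income_i = \al_i \int x^*_i \, d\da_i,$$
and dividing by $\al_i$ (or handling $\al_i = 0$ trivially) yields $\int g \, d\da_i \leq \int x^*_i \, d\da_i$. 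Summability $\sum_i \income_i = p^*(\W) < \infty$ follows automatically from the partition representation, and $\sum_i \income_i > 0$ since $\sum_i \al_i > 0$.

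\textbf{Main obstacle.} The principal difficulty is the infinite-dimensional duality argument underlying the necessity direction: since the commodity space is $L^\infty$ (with dual $ba$, not $L^1$), one cannot simply invoke finite-dimensional LP duality to produce $\al_i$. A concrete route is to define $\al_i \equiv \sup\{\da_i(A)/p^*(A) : A \in \Sigma,\, p^*(A) > 0\}$ and use mutual absolute continuity of the $\da_i$ together with the equilibrium hypothesis---if this supremum were infinite, one could pick sets witnessing arbitrarily large utility per unit of budget, contradicting finiteness of agent $i$'s optimum. Radon-Nikodym then identifies $E_i$ as the level set where $d\da_i/dp^*$ achieves $\al_i$. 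A secondary subtlety worth flagging is that $p^*$ is a priori only finitely additive in the definition of equilibrium, but the representation $p^* = \bigvee_i \al_i \da_i$ automatically yields countable additivity: $p^*$ restricted to $E_i$ is a scalar multiple of the countably additive $\da_i$, and the partition sum converges because $p^*(\W) < \infty$.
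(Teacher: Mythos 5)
Your sufficiency direction is essentially the paper's own argument and is fine, except that the case $\al_i=0$ is not ``trivial'': with $\income_i=0$ you must show that \emph{every} zero-cost bundle yields agent $i$ zero utility, and this requires the mutual absolute continuity of the $\da_j$'s (zero cost forces the bundle to be $\da_j$-null on each cell $E_j$ with $\al_j>0$, and only mutual absolute continuity lets you transfer this to $\da_i$). The genuine gap is in the necessity direction, and it is the countable additivity of $p^*$. Your argument runs through the Radon--Nikodym derivative $d\da_i/dp^*$, which exists only if $p^*$ is countably additive; but the definition of equilibrium gives only a finitely additive $p^*$, and your proposed fix --- that countable additivity ``follows automatically from the representation $p^*=\bigvee_i \al_i\da_i$'' --- is circular, since that representation is the output of the Radon--Nikodym-based argument you are trying to run. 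The paper breaks this circle by proving countable additivity \emph{first}, directly from the equilibrium conditions: if $F_{n+1}\subseteq F_n$, $\bigcap_n F_n=\os$, but $p^*(F_n)\geq y>0$, then some agent $i$ spends at least $y/\abs{N}$ on every $F_n$; dominated convergence (using countable additivity of $\da_i$) gives $\int_{F_n}x_i^*\,d\da_i\to 0$, so for large $n$ agent $i$ can sell her consumption on $F_n$ and buy the uniform bundle $\frac{\int_{F_n}x_i^*\,dp^*}{p^*(\Omega)}\one_{\Omega}$ at equal cost, a strict improvement --- contradiction. Some such lemma is indispensable before $d\da_i/dp^*$ can even be written down.

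Even granting countable additivity, your appeal to ``complementary slackness'' asserts precisely what has to be proved. The essential supremum $\al_i$ of $d\da_i/dp^*$ need not be attained on a set of positive measure, so defining $E_i$ as ``the level set where $d\da_i/dp^*$ achieves $\al_i$'' is not legitimate a priori; and the claim that the support of $x_i^*$ lies in that level set is the substantive step, not a formal consequence of duality (the commodity space is $L^\infty$, as you note, so there is no off-the-shelf multiplier theorem to invoke). The paper proves this by an explicit perturbation: if $E\subseteq\{x_i^*>0\}$ and $F$ satisfied $\da_i(F)/p^*(F)>\da_i(E)/p^*(E)$, pass to $E^y=\{\omega\in E: x_i^*(\omega)\geq y\}$ of positive measure (again using countable additivity), sell $y$ units of consumption on $E^y$ and spend the proceeds uniformly on $F$; the new bundle is feasible, costs the same, and is strictly better. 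This yields the ratio inequality $\da_i(F)p^*(E)\leq\da_i(E)p^*(F)$ for all $E$ in the support, from which attainment and the join representation follow with $\al_i=p^*(E_i)/\da_i(E_i)$. (Note also that with your definition $\al_i=\sup_A \da_i(A)/p^*(A)$, the conclusion should read $p^*=\bigvee_i \al_i^{-1}\da_i$, not $\bigvee_i\al_i\da_i$.) Without the countable-additivity lemma and the perturbation argument, the necessity direction is not established.
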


The following establishes uniqueness of Eisenberg-Gale aggregation, supposing mutual absolute continuity of $p_i$.  The proof essentially replicates the argument found in \cite{eisenberg1959consensus}.

\begin{proposition}\label{prop:uniqueness}In the framework of Proposition~\ref{prop:join}, if $(p,x)$ and $(\bar{p},\bar{x})$ are equilibria, where each of $p$ and $\bar{p}$ is a probability measure, then $p = \bar{p}$.\end{proposition}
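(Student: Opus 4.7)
The plan is to follow the classical Eisenberg--Gale strategy: reduce uniqueness of the equilibrium price to uniqueness of the equilibrium utility levels in an associated strictly concave program (the weighted Nash welfare). Write $u_i(x) := \int x\, d\da_i$ and assume without loss that $\income_i > 0$ for every $i\in N$ (agents with $\income_i=0$ are forced to $x_i^* = 0$ and play no role). Because preferences are linear and each $\da_i$ has positive total mass, budget exhaustion holds, $\int x_i^*\, dp^* = \income_i$; summing and using $\sum_i x_i^* = \mathbf{1}$ together with $p^*(\Omega)=1$ yields $\sum_i \income_i = 1$, and the same identity holds for $(\bar p^*,\bar x^*)$.

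The key step is to show that every parimutuel equilibrium allocation $x^*$ maximizes the weighted Nash welfare $\Phi(x) := \sum_i \income_i \log u_i(x_i)$ over the feasible set $\{(x_i) : \sum_i x_i = \mathbf{1},\ x_i \geq 0\}$. Pick a feasible alternative $(y_i)$; we may assume $u_i(y_i)>0$ for all $i$, else $\Phi(y) = -\infty$ and there is nothing to show. Mutual absolute continuity of the $\da_i$ (and hence of $p^*$ with each $\da_i$, via the envelope representation from Proposition~\ref{prop:join}) forces $\int y_i\, dp^* > 0$, so the rescaling $y_i' := (\income_i/\int y_i\, dp^*)\, y_i$ is well-defined and exhausts $i$'s budget at prices $p^*$. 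Optimality of $x_i^*$ gives $u_i(y_i') \leq u_i(x_i^*)$, i.e.
\[
\log u_i(y_i) - \log u_i(x_i^*) \,\leq\, \log\frac{\int y_i\, dp^*}{\income_i}.
\]
Multiplying by $\income_i$, summing over $i$, and applying Jensen's inequality for $\log$ with weights $\income_i$ (using $\sum_i \income_i = 1$ and $\sum_i \int y_i\, dp^* = 1$) bounds the right-hand side by $\log 1 = 0$, giving $\Phi(y) \leq \Phi(x^*)$.

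With both $x^*$ and $\bar x^*$ maximizing $\Phi$, uniqueness of utility levels follows from strict concavity. Since each $u_i$ is linear, the attainable utility set $U := \{(u_i(x_i))_i : \sum_i x_i = \mathbf{1},\ x_i \geq 0\}$ is convex in $\Re_+^N$, and $\Phi$ factors through $U$ as the strictly concave map $(v_i) \mapsto \sum_i \income_i \log v_i$ on $\Re_{++}^N$ (note $u_i(x_i^*) \geq \income_i > 0$, since the constant stream $\income_i\mathbf{1}$ lies in $i$'s budget). Hence $u_i(x_i^*) = u_i(\bar x_i^*)$ for all $i$. By Proposition~\ref{prop:join} and the KKT conditions for the Nash welfare program, $p^* = \bigvee_i \alpha_i\da_i$ with $\alpha_i = \income_i / u_i(x_i^*)$, and analogously $\bar p^* = \bigvee_i \bar\alpha_i\da_i$ with $\bar\alpha_i = \income_i / u_i(\bar x_i^*)$; equality of utility levels forces $\alpha_i = \bar\alpha_i$, so the two joins coincide and $p^* = \bar p^*$. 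The main technical obstacle is executing the Nash-welfare characterization cleanly in the measure-theoretic framework of Section~\ref{sec:generalmodel}, rather than in the textbook finite-dimensional Fisher market: in particular, justifying the rescaling when $u_i(y_i)$ is small and applying Jensen's inequality correctly, both of which rely on mutual absolute continuity of the $\da_i$ (and the resulting equivalence of $p^*$ with each $\da_i$).
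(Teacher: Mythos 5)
Your proof is correct, and it takes a genuinely different route from the paper's. The paper follows Eisenberg's original consensus-probability argument: it applies Proposition~\ref{prop:join} to both equilibria to obtain envelope weights $\al_i$ and $\bar{\al}_i$, passes to Radon--Nikodym densities with respect to a reference measure $\mu$, multiplies the two agents' optimality inequalities pointwise, integrates over the product space $\Omega\times\Omega$, and sums over agents to obtain $\int \bar{p}\,\frac{\bar{p}}{p}\,d\mu \leq 1$; the equality case of the Cauchy--Schwarz inequality then forces $\bar{p}=\beta p$, and $\beta=1$ because both are probability measures. You instead prove a first-welfare-theorem-type statement for the parimutuel market --- every equilibrium allocation maximizes the weighted Nash welfare $\sum_i \income_i \log u_i(x_i)$ over feasible allocations --- and deduce uniqueness of equilibrium utility levels from strict concavity, then uniqueness of prices from the identification $\al_i = \income_i/u_i(x^*_i)$ combined with Proposition~\ref{prop:join}. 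Your route makes explicit the Samuelson/Nash-welfare connection that the paper only sketches informally at the start of Section~\ref{sec:generalmodel}, delivers uniqueness of equilibrium utilities as a by-product, and replaces the product-measure and Cauchy--Schwarz manipulations with Jensen's inequality plus strict concavity of the logarithm; the paper's route, by contrast, works entirely at the level of price densities and needs no welfare program. Two points in your write-up should be tightened, though neither is a gap. First, drop the appeal to ``KKT conditions,'' which are not available off the shelf in this infinite-dimensional setting; the identification $\al_i = \income_i/u_i(x^*_i)$ follows directly from the proof of Proposition~\ref{prop:join}: there $\al_i = p^*(E_i)/\da_i(E_i)$ with $E_i=\{x^*_i>0\}$, and inequality~\eqref{eq:ratio} yields $p^*(F)=\al_i\da_i(F)$ for every $F\subseteq E_i$, whence budget exhaustion gives $\income_i=\int x^*_i\,dp^* = \al_i\int x^*_i\,d\da_i = \al_i u_i(x^*_i)$. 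Second, to pass from $\al_i=\bar{\al}_i$ to $p=\bar{p}$, note explicitly that the join $\bigvee_i \al_i\da_i$ is a least upper bound (the partition property in its definition implies any measure dominating every $\al_i\da_i$ dominates the join), so it is uniquely determined by the weights and the $\da_i$. With those two repairs your argument is complete, and it rests on the same ingredients extracted from Proposition~\ref{prop:join} --- mutual absolute continuity and budget exhaustion --- that the paper's own proof uses.
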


Of note is that we have not been able to establish existence in general; though it is clear that prices of the form Proposition~\ref{prop:join} are the only potential prices one needs to check.  The literature seems to have been unable to address this problem in our context; relevant works include \cite{wilson1981,richard1988,burke1988}.  These papers must assume that preferences are not strictly monotonic, which does not hold in our context.

\subsection{Proof of Theorem~\ref{thm:main1}}
First note that \eqref{thm1it1} holds iff $\log f(t)$ is (discretely) convex in $t$: 
 log-convexity means that $2\log(f(t+1))\leq \log(f(t))+\log(f(t+2))$, or $\frac{f(t+1)}{f(t)}\leq \frac{f(t+2)}{f(t+1)}$.  We claim that log-convexity is equivalent to the convexity of $\beta^t f(t)$ for any $\beta>1$, a property that is equivalent to Statement~\eqref{thm1it2} in the theorem.\footnote{The equivalence between decreasing impatience and log-convexity is emphasized by \cite{prelec2004decreasing}. See his Corollary 1. The equivalence between log-convexity and the convexity of  $\beta^t f(t)$  is essentially an idea from \cite{montel1928fonctions}.}
 
Convexity of $\beta^t f(t)$ in $t$ means that for every $t\geq 0$, $\beta^t f(t) + \beta^{t+2} f(t+2) \geq 2 \beta^{t+1} f(t+1)$, hence \[ 
h(\beta) =  \beta^2 f(t+2) - 2 \beta f(t+1) + f(t) \geq 0.
\] So fix $t \geq 1$, and observe that $h$ is convex in $\beta$ as $f(t+2)>0$.  We solve for the minimum value of $h$ over $\beta$: the first-order condition gives $2\beta f(t+2) - 2 f(t+1)=0$. Now, since  $\beta = f(t+1)/f(t+2)\geq 1$ (as $f$ is monotone decreasing), the minimum value of  $h$ is 
\[ \left(\frac{f(t+1)}{f(t+2)}\right)^2 f(t+2) - 2  \left(\frac{f(t+1)}{f(t+2)}\right) f(t+1) + f(t) \geq 0.
\] The inequality occurs when $f(t+1)=f(t+2)$ by continuity of $h$ in $\beta$.  Thus $f(t)\geq \frac{f(t+1)^2}{f(t+2)}$, which is log-convexity. 

The converse implication is obtained by reversing the steps in the proof we just finished.

Now we show that \eqref{thm1it1} is equivalent to \eqref{thm1it3}. We have seen that~\eqref{thm1it1} is equivalent to log-convexity of $f$. We show that $\log f(t)$ is monotone decreasing and convex holds if and only if it is the pointwise maximum of a collection $A$ of decreasing affine functions; the argument is entirely standard.  To see this, let $g$ be a monotone decreasing and convex function, and let $t^*$ be a time period. It is sufficient to show that there is a decreasing, affine function $h$ for which $g \geq h$, and $g(t^*)=h(t^*)$. Suppose that $t^*\geq 1$. Observe that by convexity and decreasingness, $g(t^*)-g(t^*-1) \leq g(t^*+1)-g(t^*) < 0$.  Define $h(t) = g(t^*)+(g(t^*)-g(t^*-1))(t-t^*)$.  Observe that $h(t^*)=g(t^*)$ and that $h$ is decreasing and affine.  Suppose that $t \geq t^*$.  Then a simple inductive argument establishes $g(t)-g(t-1)\geq g(t^*)-g(t^*-1)$, from which we conclude that for any $s \geq t^*$, $h(s)=g(t^*)+(g(t^*)-g(t^*-1))(s-t^*)=g(t^*)+\sum_{t=t^*+1}^s (g(t^*)-g(t^*-1))\leq g(t^*)+\sum_{t=t^*+1}^s (g(t)-g(t-1))=g(s)$.  A symmetric argument demonstrates the inequality for $t \leq t^*$; finally, for $t^*=0$ it is sufficient to choose the slope of the relevant $h$ function to be $g(1)-g(0)$.

Because there are a countable number of time periods, we may take $A$ to be at most countable.

Each element of $A$ is of the form $t\mapsto a - d t$, and hence identified with a pair $(a,d)$ of scalars with $d>0$.

Each $t$ can be associated with a member of $A$.

Consider then a parimutuel economy with $N =\Na$, which is countable, and for which each $i\in N$ is associated with $(a_i,d_i)\in A$ for which $\log f(i)=a_i-id_i$ and $\log f(t) \geq a_i -td_i$.  Then, $i\in N$ has preferences $\succeq_{(a_i,d_i)}$ associated with the stationary discount factor $f_{(a_i,d_i)}(t)=(e^{-d_i})^t$.  Let $x^*_i(i)=1$ and zero otherwise.  Let $\income_i = f(i)$.

Observe that for each $i$, $\sum_t f(t) x_{i}(t)=f(i)=w_i$.  Next, let $y\in \ell^{\infty}_+$ so that $\sum_t f(t) y(t) \leq w_i$.  Then since $f(t)\geq e^{a_i}(e^{-d_i})^t$, it follows that $\sum_t e^{a_i}(e^{-d_i})^t y(t)\leq \sum_t f(t) y(t)\leq w_i$.  Finally, $\sum_t e^{a_i}(e^{-d_i})^t x^*_i(t)=e^{a_i}(e^{-d_i})^i=f(i)=w_i$.  So $x_i^*$ is feasible and maximizes agent $i$'s utility.

The converse, that any parimutuel equilibrium prices display log-convexity, proceeds as follows.  Consider any parimutuel equilibrium $(p^*,x^*)$ for an economy of agents with stationary discount factors.  Let $f=p^*$, and let $t,t+1,t+2$ and let $j\in N$ for which $x^*_j(t+1)>0$.  Then $\frac{\da_j^{t+1}}{f(t+1)}\geq \frac{\da_j^t}{f(t)}$, which implies $\da_j \geq \frac{f(t+1)}{f(t)}$.  And $\frac{\da_j^{t+1}}{f(t+1)}\geq \frac{\da_j^{t+2}}{f(t+2)}$, which implies $\da_j \leq \frac{f(t+2)}{f(t+1)}$.  Conclude $\frac{f(t+2)}{f(t+1)}\geq \frac{f(t+1)}{f(t)}$.

Now we turn to the equivalence between~\eqref{thm1it1} and~\eqref{thmit2.5}. 
Observe that if $f$ is positive, decreasing and satisfies non-decreasing impatience, then there exists $\gamma\in (0,1]$ with $f(t+1)/f(t)\rightarrow \gamma$.
Let $g(t)\equiv \gamma^{-t}f(t)$. Note that $g$ is  decreasing and satisfies log-convexity.  To see that it is decreasing, observe that $\frac{g(t+1)}{g(t)}= \frac{f(t+1)}{\gamma f(t)}\leq 1$ as $f(t+1)\leq \gamma f(t)$.  To see that it is  log-convex, recall that log-convexity is the same as non-decreasing impatience, and observe that $\frac{g(t+2)}{g(t+1)}=\frac{ f(t+2)}{\gamma f(t+1)}\geq \frac{ f(t+1)}{\gamma f(t)}=\frac{g(t+1)}{g(t)}$, as $f$ is log-convex.

This means that the sequence $h(t) = \log g(0)-\log g(t)$ is increasing, concave, and equals $0$ at $t=0$. We also have that $h(t+1)-h(t)=\log (g(t+1)/g(t))\to 0$, as $g(t+1)/g(t)\to 1$ by definition of $g$. By Lemma~\ref{lem:genbetadelta} there exists $\al\in\ell^1_+$ with $h(t)= \sum_s\al(s)\min\{s,t \}$. 

This tells us that \[g(t)=g(0) \prod_{s=0}^{\infty}\max\{e^{-s\alpha(s)},e^{-t\alpha(s)}\}.\] Thus, \[f(t)=\gamma^t g(t) = g(0)\prod_{s=0}^{\infty}\max\{\beta(s)^s\gamma(s)^t,(\beta(s)\gamma(s))^t\},\]
where
\begin{align*}
    \gamma(s) & = \gamma^\frac{1}{2^{s+1}} \\
    \beta(s) & = e^{-\alpha(s)}.
\end{align*}

Fix any sequence $\eta_s>0$ with $\sum_s \eta_s=1$. For each $s=0,\ldots$ the discount factor $t\mapsto \max\{\beta(s)^s\gamma(s)^t,(\beta(s)\gamma(s))^t\}$ is generalized  $\beta-\delta$ with $\beta = (\beta(s)\gamma(s))^{1/\eta_s}$ and $\delta=(\gamma(s))^{1/\eta_s}$, where the switch point is at $s$. Let $f_s$ denote this discount factor. Then we have that $f(t) = g(0)\prod_s f_s(t)^{\eta_s}$.

Conversely, it is basic algebra to see that positive and log-convex functions are preserved under both products and powers:
\begin{itemize}
\item If $f,g>0$ are log-convex, then so is $(fg)(t)=f(t)g(t)$.
\item If $f>0$ is log-convex and $\alpha>0$, then so is $f^{\alpha}(t)=(f(t))^{\alpha}$.
\end{itemize}

Clearly each generalized $\beta-\da$ discount factor is positive and log-convex.  The result then follows for countable geometric means by taking limits.

In the proof we have used the following lemma, which is an analogue of a result of \citet{blaschke1916}.

\begin{lemma}\label{lem:genbetadelta}
Suppose that $f$ satisfies
\begin{enumerate}
\item $f(0)=0$
\item $f(t) \geq 0$ for all $t>0$
\item $f$ concave, increasing, satisfies $\lim_{t\rightarrow\infty}f(t+1)-f(t)=0$.
\end{enumerate}
Then there exists $\alpha\in\ell^1_+$ for which for all $t$, $f(t) =\sum_{s=0}^{\infty}\alpha(s)\min\{s,t\}$.
\end{lemma}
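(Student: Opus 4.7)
The plan is to construct $\alpha$ explicitly by reading off the second differences of $f$, then verify positivity, summability, and the representation formula by straightforward telescoping.

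First I would set up notation: let $\Delta f(t)=f(t+1)-f(t)$ be the forward difference. By concavity of $f$ on the integers, $\Delta f$ is non-increasing; by monotonicity of $f$, $\Delta f(t)\geq 0$; by hypothesis, $\Delta f(t)\to 0$ as $t\to\infty$. I would then \emph{define}
\[
\alpha(s) = \Delta f(s-1)-\Delta f(s) \quad \text{for } s\geq 1, \qquad \alpha(0)=0.
\]
The motivation is the following observation: for fixed $s\geq 1$, the function $t\mapsto \min\{s,t\}$ has forward difference $1$ if $s\geq t+1$ and $0$ otherwise, so any representation $f(t)=\sum_s\alpha(s)\min\{s,t\}$ would force $\Delta f(t)=\sum_{s\geq t+1}\alpha(s)$, and hence $\alpha(s)=\Delta f(s-1)-\Delta f(s)$.

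Next I would check the two properties of $\alpha$. Non-negativity is immediate from concavity: $\alpha(s)=\Delta f(s-1)-\Delta f(s)\geq 0$. Summability follows by telescoping: for every $N\geq 1$,
\[
\sum_{s=1}^{N}\alpha(s) = \Delta f(0)-\Delta f(N),
\]
and letting $N\to\infty$ and using $\Delta f(N)\to 0$ together with $f(0)=0$ gives $\sum_{s=1}^{\infty}\alpha(s)=\Delta f(0)=f(1)<\infty$. Hence $\alpha\in\ell^1_+$.

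Finally I would verify the representation. Writing $f(t)=\sum_{r=0}^{t-1}\Delta f(r)$ (using $f(0)=0$) and substituting $\Delta f(r)=\sum_{s\geq r+1}\alpha(s)$ (which itself follows by telescoping the definition of $\alpha$, using $\Delta f(N)\to 0$), Fubini for nonnegative summands gives
\[
f(t)=\sum_{r=0}^{t-1}\sum_{s\geq r+1}\alpha(s)=\sum_{s=1}^{\infty}\alpha(s)\cdot\#\{r:0\leq r\leq t-1,\ r+1\leq s\}=\sum_{s=0}^{\infty}\alpha(s)\min\{s,t\},
\]
where the last equality also uses $\alpha(0)=0$ to include the $s=0$ term harmlessly. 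This completes the proof. The only subtle step is the interchange of sums leading to the $\min\{s,t\}$ factor, and this is entirely routine because all terms are nonnegative; the real content of the hypotheses is used earlier, to guarantee nonnegativity of $\alpha$ and convergence of $\Delta f(N)$ to $0$ (which is exactly what makes $\sum_{s\geq r+1}\alpha(s)$ telescope back to $\Delta f(r)$). So I do not expect any genuine obstacle.
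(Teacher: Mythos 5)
Your proof is correct and follows essentially the same route as the paper's: both define $\alpha(s)$ as the negated second difference $\Delta f(s-1)-\Delta f(s)$, get nonnegativity from concavity, get summability by telescoping against $\Delta f(N)\to 0$, and then recover the representation $f(t)=\sum_{s}\alpha(s)\min\{s,t\}$. The only (immaterial) difference is in the last step, where you interchange the double sum directly via Tonelli, while the paper instead checks that $f$ and $f^*(t)=\sum_s\alpha(s)\min\{s,t\}$ have identical forward differences and agree at $t=0$.
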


\begin{proof}
Observe that if it holds that $f(t) =\sum_{s=0}^{\infty}\alpha(s)\min\{s,t\}$, then   $f(t+1)-f(t)=\sum_{s\geq t+1}\alpha(s)$. So starting from $f$ we may define, for $t\geq 1$, $\alpha(t)  =-f(t+1)+2f(t)-f(t-1) = 2[f(t) - (\frac{1}{2}f(t+1)+\frac{1}{2}f(t-1))]\geq 0$ as $f$ is concave.  Let  $\alpha(0)$ be arbitrary.

Observe that $f(t)-f(t-1)=f(t+1)-f(t)+\alpha(t)$ and by induction $f(t)-f(t-1)=f(t+k+1)-f(t+k)+\sum_{s=0}^k  \alpha(t+s)$.  Since  $\lim_{t\rightarrow\infty} f(t+k+1)-f(t+k)=0$, we can conclude that $\alpha$ is summable.
Further, this implies that 
\begin{equation}\label{eq:subtract}\begin{split}
f(t+1)-f(t)  = \sum_{s=0}^\infty \al(t+1+s) = 
\sum_{s=0}^{\infty}\alpha(s)[\min\{s,t+1\}-\min\{s,t\}].    
\end{split}
\end{equation}

Finally, the function $f^*(t)\equiv \sum_{s=0}^{\infty}\alpha(s)\min\{s,t\}$ is well defined because $f^*(t) = \sum_{s=0}^t \alpha(s)s+\sum_{s=t+1}^{\infty}t\alpha(s)$, and we have already established that $\alpha$ is summable.  Then Equation~\eqref{eq:subtract} establishes that for all $t$, $f(t)-f(t-1)=f^*(t)-f^*(t-1)$, and since $f(0)=0=f^*(0)$, we know that $f=f^*$.
\end{proof}

\subsection{Proof of Theorem~\ref{thm:aggregatorgeomean}}
The necessity of the axioms is for the most part immediate.  To see that the Pareto axiom holds, let $x,y$ for which for all $i$, $f_i(t)x \geq f_i(s)y$.  Then if $x \geq 0 > y$ or $x > 0 \geq y$, the result is obvious.  Otherwise, if $x,y>0$, then $\frac{f_i(t)}{f_i(s)}\geq \frac{y}{x}$ so that $\frac{\prod_i f_i(t)^{\alpha_i}}{\prod_i f_i(s)^{\alpha_i}}=\prod_i\left(\frac{f_i(t)}{f_i(s)}\right)^{\alpha_i} \geq \frac{y}{x}$, so that $\prod_i f_i(t)^{\alpha_i}x \geq \prod_i f_i(s)^{\alpha_i}y$; with a strict inequality if any individual inequality is strict (since each $\alpha_i > 0$).  A similar argument establishes the result when $x,y<0$.  

We turn then to showing that the axioms are sufficient. Let $\varphi$ be an aggregator that satisfies the axioms. We shall prove that it is a geometric mean.

By IIA, for all $t>0$, we may define a map $\varphi_t :[0,1]^M\rightarrow [0,1]$ via $\varphi_t(f_1(t),\ldots,f_m(t)) = \varphi(f)(t)$, where each $f_i \in \mathcal{NI}$.  To see why this map is well-defined, observe that if $f_i(t) = f'_i(t)$, then for all $x,y$, $f_i(t)x \geq f_i(0)y$ iff $f_i'(t)x\geq f'_i(0)y$ (recall that $f_i(0)=f'_i(0)=1$).  Therefore, this property holds for all $i\in M$, and consequently by IIA, we know that $\varphi(f)(t)1 = \varphi(f)(0)(\varphi(f)(t))$ iff $\varphi(f')(t)1 = \varphi(f')(0)(\varphi(f)(t))$.  Since $\varphi(f')(0)=\varphi(f)(0)=1$, conclude that $\varphi(f')(t)=\varphi(f)(t)$.

By the Pareto property, for all $t,s>0$, $\varphi_t = \varphi_s$: suppose that $f_i(t)=f_i(s)$ for all $i\in M$.  Then $(1,t)$ is ranked the same as $(1,s)$ for all agents, and therefore must be for the social ranking; so that $\varphi_t(f_1(t),\ldots,f_m(t))=\varphi_s(f_1(s),\ldots,f_m(s))$.  Write $\varphi^*$ for $\varphi_t$. Observe similarly by Pareto that $\varphi^*$ is strictly increasing in all coordinates, and that for any $x\in [0,1]$, $\varphi^*(x,\ldots,x)=x$.

Now, we want to claim that for all $a,b\in[0,1]^M$ with $a \leq b$, we have $\frac{\varphi^*(a)}{\varphi^*(b)}=\varphi^*\left(\frac{a_1}{b_1},\ldots,\frac{a_m}{b_m}\right)$.

To this end, let $f_1,\ldots,f_m\in\mathcal{NI}$ for which $f_i(1)=b_i$ and $f_i(2)=a_i$.  Observe that for all $i\in M$, $f_i^1(1) = \frac{a_i}{b_i}$.  By time consistency, $\varphi(f_1^1,\ldots,f_m^1)(1) = \varphi(f_1,\ldots,f_m)^1(1)$.  The left hand side is $\varphi^*(\frac{a_1}{b_1},\dots,\frac{a_m}{b_m})$ whereas the right hand side is $\frac{\varphi^*(a_1,\ldots,a_m)}{\varphi^*(b_1,\ldots,b_m)}$.  So indeed for all $a,b\in[0,1]^M$ with $a \leq b$, we have $\frac{\varphi^*(a)}{\varphi^*(b)}=\varphi^*\left(\frac{a_1}{b_1},\ldots,\frac{a_m}{b_m}\right)$.

Observe that this is a form of the Cauchy functional equation. For $a ,b \in [0,1]^M$ with $a \leq b$, we have \[\frac{\varphi^*(a)}{\varphi^*(b)}=\varphi^*\left(\frac{a_1}{b_1},\ldots,\frac{a_m}{b_m}\right).\]

We can define $\psi:(-\infty,0]^M\rightarrow (-\infty,0]$ as $\psi(x_1,\ldots,x_m)=\log\varphi^*(\exp(x_1),\ldots,\exp(x_m))$.  Clearly $\psi(0,\ldots,0)=0$.  Observe then that $\psi$ satisfies $\psi(x-y)=\psi(x)-\psi(y)$ whenever $x \leq y$.  Analogously, $\psi(x-y)+\psi(y)=\psi(x)$, when $x\leq y$, which can equivalently be written as $\psi(x)+\psi(y)=\psi(x+y)$ for any $x,y\leq 0$. 

The result now follows from a standard Cauchy argument:  observe that for any $x\leq 0$ and any $q\in\mathbb{Q}_+$, we get $\psi(q x) = q \psi(x)$.  The monotonicity of $\psi$ then implies that for any $c\in\mathbb{R}_+$, $\psi(cx)=c\psi(x)$.

Define $\eta_i \equiv -\psi(-\mathbf{1}_i)>0$.  Then $\psi(x)=\psi(\sum_i (-x_i)(-\mathbf{1}_i))=\sum_i x_i \eta_i$, and $\sum_i \eta_i = -\psi(-\sum_i \mathbf{1}_i) = 1$ as $\varphi^*(x,\ldots,x)=x$. Thus $\log \varphi^*(e^{x_1},\ldots,e^{x_M}) = \sum_i x_i\eta_i,$ and hence $\varphi^*(e^{x_1},\ldots,e^{x_M}) = \prod_i (e^{x_i})^{\eta_i}$.

\subsection{Proof of Proposition~\ref{prop:join}}

\textbf{Establishing the direction given $\income_i$}

First, fix $\income_i\geq 0$ for which $\sum_i \income_i > 0$.  Without loss suppose that all $\income_i > 0$; otherwise, we may discard agents for which  $\income_i = 0$ and proceed.  Let  $(p^*,x^*)$ be a parimutuel equilibrium, and for each $i\in N$  define $E_i \equiv \{\omega:x_i(\omega)>0\}$. Note that $\da_i(E_i)>0$.

\textbf{Absolute continuity of $\da_i$ with respect to $p^*$.}

If $\da_i(E)>0$ for some $E\in \Sigma$ and $p^*(E) = 0$, then $\int x_i + \mathbf{1}_E d\da_i > \int x_i d\da_i$ yet $\int x_i + \mathbf{1}_E dp^* = \int x_i dp^*$, contradicting that $(p^*,x^*)$ is an equilibrium. 

\textbf{Equilibrium prices are  countably additive}

We first show that for any $E\in \Sigma$, if $E\subseteq E_i$ and $p^*(E)>0$, then $\int_E x^*_i dp^*>0$.  This follows  as $p^*(E)>0$ implies $\da_i(E)>0$ (otherwise $i$ could increase wealth by selling her consumption on $E$), and $\int_E x_i d\da_i>0$ by countable additivity of $\da_i$.  So $\int_E x_i dp^*=0$ is not possible when $(p^*,x^*)$ is an equilibrium, again because it would mean that $i$ can raise her utility for free. 

Now let us suppose by means of contradiction that $p^*$ is not countably additive.  Then there is some $y>0$ and sequence $\{F_n\}_{n\in\mathbb{N}}$ for which $F_{n+1}\subseteq F_n$ and $\bigcap_n F_n = \varnothing$, but $p^*(F_n)\geq y$.  

We may assume without loss that there is some $i\in N$ for which for all $n\in\Na$, $\int_{F_n} x^*_i dp^* \geq y/\abs{N}$.  This follows as equilibrium implies that  $p^*(F_n)=\sum_i \int_{F_n}x^*_i dp^*$ ($x^*_i$ is an allocation), so for each $n$ there is $i$ for which $\int_{F_n}x^*_i dp^*\geq \frac{y}{\abs{N}}$. We can just take an $i$ that appears infinitely often.

We know by Lebesgue dominated convergence, and the countable additivity of $\da_i$, that $\int_{F_n} x^*_i d
\da_i\rightarrow 0$.  So $\frac{\int_{F_n}x^*_i d\da_i}{\int_{F_n}x^*_i dp^*}\rightarrow 0$.  Pick $n$ large so that $\int_{F_n}x^*_i d\da_i < \frac{\int_{F_n}x^*_i dp^*}{p^*(\Omega)}$. Then $x^*_i -x^*_i|_{F_n} + \frac{\int_{F_n}x^*_i dp^*}{p^*(\Omega)}\mathbf{1}_{\Omega}$ is strictly preferred to $x^*_i$ for agent $i$, and costs the same as $x^*_i$.

\textbf{Establishing a property of ratios of measures}

Second, we show that if $E\subseteq E_i$ and $F\in \Sigma$ then \begin{equation}\label{eq:ratio}
    \da_i(F) p^*(E) \leq \da_i(E)p^*(F). 
\end{equation} Note that \eqref{eq:ratio} is immediate if $p^*(E)=0$ or (by absolute continuity) if $p^*(F)=0$. Then to prove~\eqref{eq:ratio} suppose, towards a contradiction, that  \[ 
\frac{\da_i(F)}{p^*(F)}> \frac{\da_i(E)}{p^*(E)}.
\] For $y>0$, let $E^y=\{\omega\in E: x^*_i(\omega)\geq y \}$. Note that $\cup_{y>0}E^y= E$, so the countable additivity of $\da_i$ and $p^*$ imply that there is $y>0$ with $\frac{\da_i(F)}{p^*(F)}> \frac{\da_i(E^y)}{p^*(E^y)}.$ Now observe that \[ 
\int [x^*_i - y \one_{E^y} + y \frac{p^*(E^y)}{p^*(F)} \one_{F}]d\da_i = \int x^*_i d\da_i + y [\frac{p^*(E^y)}{p^*(F)} \da_i(F) - \da_i(E^y)]
>\int x^*_i d\da_i,\] where $x^*_i - y \one_{E^y} + y \frac{p^*(E^y)}{p^*(F)} \one_{F}\geq 0$, while \[ 
\int [x^*_i - y \one_{E^y} + y \frac{p^*(E^y)}{p^*(F)} \one_{F}]dp^* = \int x^*_i dp^* + y [\frac{p^*(E^y)}{p^*(F)} p^*(F) - p^*(E^y)]
=\int x^*_i dp^*;\] a contradiction.

\textbf{Establishing absolute continuity of $p^*$ with respect to  each $\da_i$}

 For any $G\in\Sigma$ with $p^*(G)>0$, $\sum_i x^*_i=\one$ implies that there is $G_j\subseteq E_j\cap G$ with $p^*(G_j)>0$. Then~\eqref{eq:ratio} with $F=\Omega$ implies that $\da^*_j(G_j)>0$ which, by mutual absolute continuity of the $(\da_i)$, implies that $0<\da_i(G_j)\leq \da_i(G)$.

\textbf{Concluding this direction}

Next, define $\al_i  = \frac{p(E_i)}{\da_i(E_i)}>0$. Then \eqref{eq:ratio} implies that, for any $F\in \Sigma$, $p^*(F)\geq \al_i \da_i(F)$. It also implies that for any $F_i\subseteq E_i$ $p^*(F)= \al_i \da_i(F)$.

Finally, by $\sum_i x_i=\one$ we can find a collection $F_i\subseteq E_i$, for $i\in N$, pairwise disjoint, and with $F=\cup F_i$. Then \[ 
p^*(F) = \sum_{i\in N} \al_i \da_i(F_i).
\] It follows that $p^* = \bigvee \alpha_i \da_i$.

\textbf{Establishing the converse direction, given $\al_i$}

Conversely, suppose that $p = \bigvee \alpha_i \da_i$, for a collection $\al_i \geq 0$ and $\sum_i \al_i > 0$.  Let $\{E_i\}$ be a measurable partition of $X$ with the property that $p(F) = \al_i \da_i(F)$ for all $F\subseteq E_i$.  Choose $E_i = \varnothing$ when $\al_i = 0$.  Set $w_i=\al_i \da_i(E_i)$ and $x_i=\one_{E_i}$, so we have that $\int x_idp = p(E_i) = \al_i \da_i(E_i) = w_i$ and $\sum_i x_i = \one$.  

Finally, suppose that $g_i$ is such that $\int g_i dp \leq w_i = \al_i \da_i(E_i)$.  

First suppose that $\al_i > 0$.  Then since $p \geq \alpha_i \da_i$, we have $\int g_i dp \geq \int g_i d(\alpha_i \da_i)= \alpha_i \int g_i  d\da_i$.   Conclude that $\int g_i d\da_i \leq \da_i(E_i)=\int x_i d\da_i$. 

Suppose now that $\al_j = 0$, and suppose there is $g_j$ for which $\int g_j dp_j >0$, but $\int g_j dp = 0$.  We know that for any $E\in\Sigma$, $p(E) = \sum_{i\in N}\al_i \da_i(E_i\cap E)$.  So $\int g_j dp = \sum_i \al_i \int_{E_i} g_j d\da_i$.  Conclude that for every $i\in N$ for which $\al_i > 0$, $\int_{E_i} g_j d\da_i = 0$.  By mutual absolute continuity, this implies that $\int_{E_i}g_j d\da_j = 0$, and in particular $\int  g_j d\da_j = 0$, a contradiction. 

\subsection{Proof of Proposition~\ref{prop:uniqueness}}

First, let us suppose the economy is given, and that all $\da_i$ are mutually absolutely continuous.  As a consequence of the proof of Proposition~\ref{prop:join}, all of $\{\da_i\}$ and $p,\bar{p}$ are mutually absolutely continuous.  From here, we pick a probability measure $\mu$ with respect to which all measures are mutually absolutely continuous, and with a slight abuse of notation, refer to the Radon Nikodym derivative of any measure $\nu$ with respect to $\mu$ as $\nu\in L^1(\Omega,\mu)$.  

All relevant statements below are understood to hold $\mu$-almost everywhere, without further mention.

Now, as a first point, by Proposition~\ref{prop:join}, we have the existence of $\al_i$ and $\bar{\al}_i$, for each equilibrium.  

It is easy to see that for any $\omega\in\Omega$ and any $i$, if $x_i(\omega)>0$, then $p(\omega)=\alpha_i \da_i(\omega)$, so that 
$p(\omega)x_i(\omega)\frac{1}{\alpha_i}p(\omega)=p(\omega)x_i(\omega)\da_i(\omega)$.  And since $\bar{\alpha}_i \da_i(\omega) \leq \bar p(\omega)$, we conclude that $p(\omega)x_i(\omega)\da_i(\omega) \leq p(\omega)x_i(\omega)\frac{1}{\bar{\al}_i}\bar{p}(\omega)$.  Consequently:
\[p(\omega)x_i(\omega)\frac{1}{\al_i}p(\omega)\leq p(\omega)x_i(\omega)\frac{1}{\bar{\al}_i}\bar{p}(\omega).\]  Symmetrically, 
\[\bar{p}(\omega)\bar{x}_i(\omega)\frac{1}{\bar{\al}_i}\bar{p}(\omega)\leq \bar{p}(\omega)\bar{x}_i(\omega)\frac{1}{\al_i}p(\omega).\]

By mutual absolute continuity, and by multiplying the two inequalities pointwise, we have that for every $(\omega,\w')\in\Omega\times\Omega$,
$p(\omega)x_i(\omega)\bar{p}(\w')\bar{x}_i(\w')   p(\omega)\bar{p}(\w')\leq p(\omega)x_i(\omega)\bar{p}(\w')\bar{x}_i(\w')    \bar{p}(\omega)p(\w')$.

Hence, since these densities are $\mu$-almost everywhere strictly positive, \[ 
p(\omega)x_i(\omega)\bar{p}(\w')\bar{x}_i(\w') \frac{\bar p(\w')}{p(\w')}
\leq p(\omega)x_i(\omega)\bar{p}(\w')\bar{x}_i(\w') \frac{\bar p(\w)}{p(\w)}
\]

So, integrating with respect to the product measure $\mu\times \mu$ on $\W\times \W$ we obtain that \[ 
\income_i \int \bar{p}(\w')\bar{x}_i(\w') \frac{\bar p(\w')}{p(\w')}d\mu(\w')
\leq \income_i\int p(\omega)x_i(\omega) \frac{\bar p(\w)}{p(\w)}d\mu(\w)
\]

Since $\income_i>0$ and adding over $i\in N$ (which is finite), we may pass the sum inside the integral to obtain 
\[  
 \int \bar{p}(\w) (\sum_i\bar{x}_i(\w)) \frac{\bar p(\w)}{p(\w)} d\mu(\w)
\leq  \int p(\omega) (\sum_i x_i(\omega)) \frac{\bar p(\w)}{p(\w)} d\mu(\w).
\] Each of $x_i$ and $\bar x_i$ is an allocation, so \begin{equation}\label{eq:l2_v2}
     \int \bar{p}(\w)  \frac{\bar p(\w)}{p(\w)} d\mu(\w)
\leq  \int p(\omega) \frac{\bar p(\w)}{p(\w)} d\mu(\w) = 1.
\end{equation}

Observe that this inequality establishes that the function $g:\Omega\rightarrow\Re$ defined by $g(\w)=\frac{\bar{p}(\w)}{\sqrt{p(\w)}}$ satisfies $g\in L^2(\Omega,\mu)$.
Further, the function $h:\Omega\rightarrow \Re$ defined by $h(\w)=\sqrt{p(\w)}$ satisfies $h\in L^2(\Omega,\mu)$ as 
$\int h^2d\mu = \int p d\mu =1$.

By the Cauchy-Schwarz inequality:
\[ 
\left(\int g(\w)h(\w)d\mu(\w)\right)^2 \leq \int (g(\w))^2d\mu(\w) \int (h(\w))^2d\mu(\w)
\]
Observe that since $\int (h(\w))^2d\mu(\w)=1$, the right hand side of this inequality is given by $\int \frac{\bar{p}(\nu)\bar{p}(\nu)}{p(\nu)}d\mu(\nu)$, which we know by equation~\eqref{eq:l2_v2} is bounded by $1$.  On the other hand, we also know that $\int ghd\mu = \int \bar p d\mu=1$.   Conclude that $\left(\int g h d\mu \right)^2=\int g^2d\mu\int h^2d\mu$. The Cauchy-Schwarz inequality, however, only holds with equality for collinear vectors. So we may conclude that  that $g=\beta h$ almost everywhere, for some $\beta > 0$, from which we conclude using the definitions of $g$ and $h$, that $\bar{p}(\w)=\beta p(\w)$, which implies that $p = \bar{p}$ almost everywhere as each of them are densities of probability measures.  So $\bar{p}=p$.

\clearpage
\bibliographystyle{ecta}
\bibliography{parimutuel}

\end{document}